\newtheorem{theorem}{Theorem}
\newtheorem{axiom}[theorem]{Axiom}
\newtheorem{conjecture}[theorem]{Conjecture}
\newtheorem{corollary}[theorem]{Corollary}
\newtheorem{definition}[theorem]{Definition}
\newtheorem{example}[theorem]{Example}
\newtheorem{exercise}[theorem]{Exercise}
\newtheorem{lemma}[theorem]{Lemma}
\newtheorem{proposition}[theorem]{Proposition}
\newtheorem{remark}[theorem]{Remark}
\let\pdfoutput=\undefined\fi
\chardef\@x10\chardef\@xv60
\def\tcitime{
\def\@time{%
  \@minute\time\@hour\@minute\divide\@hour\@xv
  \ifnum\@hour<\@x 0\fi\the\@hour:%
  \multiply\@hour\@xv\advance\@minute-\@hour
  \ifnum\@minute<\@x 0\fi\the\@minute
  }}%
\def\x@hyperref#1#2#3{%
   \catcode`\~ = 12
   \catcode`\$ = 12
   \catcode`\_ = 12
   \catcode`\# = 12
   \catcode`\& = 12
   \catcode`\% = 12
   \y@hyperref{#1}{#2}{#3}%
}
\def\y@hyperref#1#2#3#4{%
   #2\ref{#4}#3
   \catcode`\~ = 13
   \catcode`\$ = 3
   \catcode`\_ = 8
   \catcode`\# = 6
   \catcode`\& = 4
   \catcode`\% = 14
}
\def\QCTOpt[#1]#2{%
  \def\QCTOptB{#1}
  \def\QCTOptA{#2}
}
\def\QCTNOpt#1{%
  \def\QCTOptA{#1}
  \let\QCTOptB\empty
}
\def\Qct{%
  \@ifnextchar[{%
    \QCTOpt}{\QCTNOpt}
}
\def\QCBOpt[#1]#2{%
  \def\QCBOptB{#1}%
  \def\QCBOptA{#2}%
}
\def\QCBNOpt#1{%
  \def\QCBOptA{#1}%
  \let\QCBOptB\empty
}
\def\Qcb{%
  \@ifnextchar[{%
    \QCBOpt}{\QCBNOpt}%
}
\def\PrepCapArgs{%
  \ifx\QCBOptA\empty
    \ifx\QCTOptA\empty
      {}%
    \else
      \ifx\QCTOptB\empty
        {\QCTOptA}%
      \else
        [\QCTOptB]{\QCTOptA}%
      \fi
    \fi
  \else
    \ifx\QCBOptA\empty
      {}%
    \else
      \ifx\QCBOptB\empty
        {\QCBOptA}%
      \else
        [\QCBOptB]{\QCBOptA}%
      \fi
    \fi
  \fi
}
\def\GRAPHICSPS#1{%
 \ifcase\GRAPHICSTYPE
   \special{ps: #1}%
 \or
   \special{language "PS", include "#1"}%
 \fi
}%
\def\graffile#1#2#3#4{%
    \bgroup
	   \@inlabelfalse
       \leavevmode
       \@ifundefined{bbl@deactivate}{\def~{\string~}}{\activesoff}%
        \raise -#4 \BOXTHEFRAME{%
           \hbox to #2{\raise #3\hbox to #2{\null #1\hfil}}}%
    \egroup
}%
\def\draftbox#1#2#3#4{%
 \leavevmode\raise -#4 \hbox{%
  \frame{\rlap{\protect\tiny #1}\hbox to #2%
   {\vrule height#3 width\z@ depth\z@\hfil}%
  }%
 }%
}%
\let\nographics=\@msidraft
\newif\ifwasdraft
\def\GRAPHIC#1#2#3#4#5{%
   \ifnum\@msidraft=\@ne\draftbox{#2}{#3}{#4}{#5}%
   \else\graffile{#1}{#3}{#4}{#5}%
   \fi
}
\def\addtoLaTeXparams#1{%
    \edef\LaTeXparams{\LaTeXparams #1}}%
\newif\ifBoxFrame \BoxFramefalse
\newif\ifOverFrame \OverFramefalse
\newif\ifUnderFrame \UnderFramefalse
\def\BOXTHEFRAME#1{%
   \hbox{%
      \ifBoxFrame
         \frame{#1}%
      \else
         {#1}%
      \fi
   }%
}
\def\doFRAMEparams#1{\BoxFramefalse\OverFramefalse\UnderFramefalse\readFRAMEparams#1\end}%
\def\readFRAMEparams#1{%
 \ifx#1\end%
  \let\next=\relax
  \else
  \ifx#1i\dispkind=\z@\fi
  \ifx#1d\dispkind=\@ne\fi
  \ifx#1f\dispkind=\tw@\fi
  \ifx#1t\addtoLaTeXparams{t}\fi
  \ifx#1b\addtoLaTeXparams{b}\fi
  \ifx#1p\addtoLaTeXparams{p}\fi
  \ifx#1h\addtoLaTeXparams{h}\fi
  \ifx#1X\BoxFrametrue\fi
  \ifx#1O\OverFrametrue\fi
  \ifx#1U\UnderFrametrue\fi
  \ifx#1w
    \ifnum\@msidraft=1\wasdrafttrue\else\wasdraftfalse\fi
    \@msidraft=\@ne
  \fi
  \let\next=\readFRAMEparams
  \fi
 \next
 }%
\def\IFRAME#1#2#3#4#5#6{%
      \bgroup
      \let\QCTOptA\empty
      \let\QCTOptB\empty
      \let\QCBOptA\empty
      \let\QCBOptB\empty
      #6%
      \parindent=0pt
      \leftskip=0pt
      \rightskip=0pt
      \setbox0=\hbox{\QCBOptA}%
      \@tempdima=#1\relax
      \ifOverFrame
          \typeout{This is not implemented yet}%
          \show\HELP
      \else
         \ifdim\wd0>\@tempdima
            \advance\@tempdima by \@tempdima
            \ifdim\wd0 >\@tempdima
               \setbox1 =\vbox{%
                  \unskip\hbox to \@tempdima{\hfill\GRAPHIC{#5}{#4}{#1}{#2}{#3}\hfill}%
                  \unskip\hbox to \@tempdima{\parbox[b]{\@tempdima}{\QCBOptA}}%
               }%
               \wd1=\@tempdima
            \else
               \textwidth=\wd0
               \setbox1 =\vbox{%
                 \noindent\hbox to \wd0{\hfill\GRAPHIC{#5}{#4}{#1}{#2}{#3}\hfill}\\%
                 \noindent\hbox{\QCBOptA}%
               }%
               \wd1=\wd0
            \fi
         \else
            \ifdim\wd0>0pt
              \hsize=\@tempdima
              \setbox1=\vbox{%
                \unskip\GRAPHIC{#5}{#4}{#1}{#2}{0pt}%
                \break
                \unskip\hbox to \@tempdima{\hfill \QCBOptA\hfill}%
              }%
              \wd1=\@tempdima
           \else
              \hsize=\@tempdima
              \setbox1=\vbox{%
                \unskip\GRAPHIC{#5}{#4}{#1}{#2}{0pt}%
              }%
              \wd1=\@tempdima
           \fi
         \fi
         \@tempdimb=\ht1
         \advance\@tempdimb by -#2
         \advance\@tempdimb by #3
         \leavevmode
         \raise -\@tempdimb \hbox{\box1}%
      \fi
      \egroup%
}%
\def\DFRAME#1#2#3#4#5{%
  \vspace\topsep
  \hfil\break
  \bgroup
     \leftskip\@flushglue
	 \rightskip\@flushglue
	 \parindent\z@
	 \parfillskip\z@skip
     \let\QCTOptA\empty
     \let\QCTOptB\empty
     \let\QCBOptA\empty
     \let\QCBOptB\empty
	 \vbox\bgroup
        \ifOverFrame 
           #5\QCTOptA\par
        \fi
        \GRAPHIC{#4}{#3}{#1}{#2}{\z@}%
        \ifUnderFrame 
           \break#5\QCBOptA
        \fi
	 \egroup
  \egroup
  \vspace\topsep
  \break
}%
\def\FFRAME#1#2#3#4#5#6#7{%
  \@ifundefined{floatstyle}
    {
     \begin{figure}[#1]%
    }
    {
	 \ifx#1h
      \begin{figure}[H]%
	 \else
      \begin{figure}[#1]%
	 \fi
	}
  \let\QCTOptA\empty
  \let\QCTOptB\empty
  \let\QCBOptA\empty
  \let\QCBOptB\empty
  \ifOverFrame
    #4
    \ifx\QCTOptA\empty
    \else
      \ifx\QCTOptB\empty
        \caption{\QCTOptA}%
      \else
        \caption[\QCTOptB]{\QCTOptA}%
      \fi
    \fi
    \ifUnderFrame\else
      \label{#5}%
    \fi
  \else
    \UnderFrametrue%
  \fi
  \begin{center}\GRAPHIC{#7}{#6}{#2}{#3}{\z@}\end{center}%
  \ifUnderFrame
    #4
    \ifx\QCBOptA\empty
      \caption{}%
    \else
      \ifx\QCBOptB\empty
        \caption{\QCBOptA}%
      \else
        \caption[\QCBOptB]{\QCBOptA}%
      \fi
    \fi
    \label{#5}%
  \fi
  \end{figure}%
 }%
\def\makeactives{
  \catcode`\"=\active
  \catcode`\;=\active
  \catcode`\:=\active
  \catcode`\'=\active
  \catcode`\~=\active
}
   \gdef\activesoff{%
      \def"{\string"}%
      \def;{\string;}%
      \def:{\string:}%
      \def'{\string'}%
      \def~{\string~}%
    }
\def\FRAME#1#2#3#4#5#6#7#8{%
 \bgroup
 \ifnum\@msidraft=\@ne
   \wasdrafttrue
 \else
   \wasdraftfalse%
 \fi
 \def\LaTeXparams{}%
 \dispkind=\z@
 \def\LaTeXparams{}%
 \doFRAMEparams{#1}%
 \ifnum\dispkind=\z@\IFRAME{#2}{#3}{#4}{#7}{#8}{#5}\else
  \ifnum\dispkind=\@ne\DFRAME{#2}{#3}{#7}{#8}{#5}\else
   \ifnum\dispkind=\tw@
    \edef\@tempa{\noexpand\FFRAME{\LaTeXparams}}%
    \@tempa{#2}{#3}{#5}{#6}{#7}{#8}%
    \fi
   \fi
  \fi
  \ifwasdraft\@msidraft=1\else\@msidraft=0\fi{}%
  \egroup
 }%
\def\TEXUX#1{"texux"}
\long\def\QQQ#1#2{%
     \long\expandafter\def\csname#1\endcsname{#2}}%
\long\def\QQA#1#2{}%
\def\QTR#1#2{{\csname#1\endcsname {#2}}}%
\def\EXPAND#1[#2]#3{}%
\def\NOEXPAND#1[#2]#3{}%
\def\LaTeXparent#1{}%
\def\ChildStyles#1{}%
\def\ChildDefaults#1{}%
\def\QTagDef#1#2#3{}%
  \providecommand{\UNICODE}[2][]{\protect\rule{.1in}{.1in}}
  \providecommand{\U}[1]{\protect\rule{.1in}{.1in}}
\def\QQfnmark#1{\footnotemark}
 \def\abstract{%
  \if@twocolumn
   \section*{Abstract (Not appropriate in this style!)}%
   \else \small 
   \begin{center}{\bf Abstract\vspace{-.5em}\vspace{\z@}}\end{center}%
   \quotation 
   \fi
  }%
   \def\registered{\relax\ifmmode{}\r@gistered
                    \else$\m@th\r@gistered$\fi}%
 \def\r@gistered{^{\ooalign
  {\hfil\raise.07ex\hbox{$\scriptstyle\rm\text{R}$}\hfil\crcr
  \mathhexbox20D}}}}{}%
\newdimen\theight
\def\newfmtname{LaTeX2e}
  \DeclareOldFontCommand{\rm}{\normalfont\rmfamily}{\mathrm}
  \DeclareOldFontCommand{\sf}{\normalfont\sffamily}{\mathsf}
  \DeclareOldFontCommand{\tt}{\normalfont\ttfamily}{\mathtt}
  \DeclareOldFontCommand{\bf}{\normalfont\bfseries}{\mathbf}
  \DeclareOldFontCommand{\it}{\normalfont\itshape}{\mathit}
  \DeclareOldFontCommand{\sl}{\normalfont\slshape}{\@nomath\sl}
  \DeclareOldFontCommand{\sc}{\normalfont\scshape}{\@nomath\sc}
\def\alpha{{\Greekmath 010B}}%
\def\beta{{\Greekmath 010C}}%
\def\gamma{{\Greekmath 010D}}%
\def\delta{{\Greekmath 010E}}%
\def\epsilon{{\Greekmath 010F}}%
\def\zeta{{\Greekmath 0110}}%
\def\eta{{\Greekmath 0111}}%
\def\theta{{\Greekmath 0112}}%
\def\iota{{\Greekmath 0113}}%
\def\kappa{{\Greekmath 0114}}%
\def\lambda{{\Greekmath 0115}}%
\def\mu{{\Greekmath 0116}}%
\def\nu{{\Greekmath 0117}}%
\def\xi{{\Greekmath 0118}}%
\def\pi{{\Greekmath 0119}}%
\def\rho{{\Greekmath 011A}}%
\def\sigma{{\Greekmath 011B}}%
\def\tau{{\Greekmath 011C}}%
\def\upsilon{{\Greekmath 011D}}%
\def\phi{{\Greekmath 011E}}%
\def\chi{{\Greekmath 011F}}%
\def\psi{{\Greekmath 0120}}%
\def\omega{{\Greekmath 0121}}%
\def\varepsilon{{\Greekmath 0122}}%
\def\vartheta{{\Greekmath 0123}}%
\def\varpi{{\Greekmath 0124}}%
\def\varrho{{\Greekmath 0125}}%
\def\varsigma{{\Greekmath 0126}}%
\def\varphi{{\Greekmath 0127}}%
\def\nabla{{\Greekmath 0272}}
\def\FindBoldGroup{%
   {\setbox0=\hbox{$\mathbf{x\global\edef\theboldgroup{\the\mathgroup}}$}}%
}
\def\Greekmath#1#2#3#4{%
    \if@compatibility
        \ifnum\mathgroup=\symbold
           \mathchoice{\mbox{\boldmath$\displaystyle\mathchar"#1#2#3#4$}}%
                      {\mbox{\boldmath$\textstyle\mathchar"#1#2#3#4$}}%
                      {\mbox{\boldmath$\scriptstyle\mathchar"#1#2#3#4$}}%
                      {\mbox{\boldmath$\scriptscriptstyle\mathchar"#1#2#3#4$}}%
        \else
           \mathchar"#1#2#3#4%
        \fi 
    \else 
        \FindBoldGroup
        \ifnum\mathgroup=\theboldgroup 
           \mathchoice{\mbox{\boldmath$\displaystyle\mathchar"#1#2#3#4$}}%
                      {\mbox{\boldmath$\textstyle\mathchar"#1#2#3#4$}}%
                      {\mbox{\boldmath$\scriptstyle\mathchar"#1#2#3#4$}}%
                      {\mbox{\boldmath$\scriptscriptstyle\mathchar"#1#2#3#4$}}%
        \else
           \mathchar"#1#2#3#4%
        \fi     	    
	  \fi}
\newif\ifGreekBold  \GreekBoldfalse
\let\SAVEPBF=\pbf
\def\pbf{\GreekBoldtrue\SAVEPBF}%
  \newcounter{equationnumber}  
  \def\mathletters{%
     \addtocounter{equation}{1}
     \edef\@currentlabel{\theequation}%
     \setcounter{equationnumber}{\c@equation}
     \setcounter{equation}{0}%
     \edef\theequation{\@currentlabel\noexpand\alph{equation}}%
  }
    \def\BibTeX{{\rm B\kern-.05em{\sc i\kern-.025em b}\kern-.08em
                 T\kern-.1667em\lower.7ex\hbox{E}\kern-.125emX}}}{}%
\def\AmS{{\protect\usefont{OMS}{cmsy}{m}{n}%
                A\kern-.1667em\lower.5ex\hbox{M}\kern-.125emS}}}{}%
\def\@@eqncr{\let\@tempa\relax
    \ifcase\@eqcnt \def\@tempa{& & &}\or \def\@tempa{& &}%
      \else \def\@tempa{&}\fi
     \@tempa
     \if@eqnsw
        \iftag@
           \@taggnum
        \else
           \@eqnnum\stepcounter{equation}%
        \fi
     \fi
     \global\tag@false
     \global\@eqnswtrue
     \global\@eqcnt\z@\cr}
\def\TCItag{\@ifnextchar*{\@TCItagstar}{\@TCItag}}
\def\@TCItag#1{%
    \global\tag@true
    \global\def\@taggnum{(#1)}%
    \global\def\@currentlabel{#1}}
\def\@TCItagstar*#1{%
    \global\tag@true
    \global\def\@taggnum{#1}%
    \global\def\@currentlabel{#1}}
\def\tint{\msi@int\textstyle\int}%
\def\tiint{\msi@int\textstyle\iint}%
\def\tiiint{\msi@int\textstyle\iiint}%
\def\tiiiint{\msi@int\textstyle\iiiint}%
\def\tidotsint{\msi@int\textstyle\idotsint}%
\def\toint{\msi@int\textstyle\oint}%
\newtoks\temptoksa
\newtoks\temptoksb
\newtoks\temptoksc
\def\msi@int#1#2{%
 \def\@temp{{#1#2\the\temptoksc_{\the\temptoksa}^{\the\temptoksb}}}%
 \futurelet\@nextcs
 \@int
}
\def\@int{%
   \ifx\@nextcs\limits
      \typeout{Found limits}%
      \temptoksc={\limits}%
	  \let\@next\@intgobble%
   \else\ifx\@nextcs\nolimits
      \typeout{Found nolimits}%
      \temptoksc={\nolimits}%
	  \let\@next\@intgobble%
   \else
      \typeout{Did not find limits or no limits}%
      \temptoksc={}%
      \let\@next\msi@limits%
   \fi\fi
   \@next   
}%
\def\@intgobble#1{%
   \typeout{arg is #1}%
   \msi@limits
}
\def\msi@limits{%
   \temptoksa={}%
   \temptoksb={}%
   \@ifnextchar_{\@limitsa}{\@limitsb}%
}
\def\@limitsa_#1{%
   \temptoksa={#1}%
   \@ifnextchar^{\@limitsc}{\@temp}%
}
\def\@limitsb{%
   \@ifnextchar^{\@limitsc}{\@temp}%
}
\def\@limitsc^#1{%
   \temptoksb={#1}%
   \@ifnextchar_{\@limitsd}{\@temp}%
}
\def\@limitsd_#1{%
   \temptoksa={#1}%
   \@temp
}
\def\dint{\msi@int\displaystyle\int}%
\def\diint{\msi@int\displaystyle\iint}%
\def\diiint{\msi@int\displaystyle\iiint}%
\def\diiiint{\msi@int\displaystyle\iiiint}%
\def\didotsint{\msi@int\displaystyle\idotsint}%
\def\doint{\msi@int\displaystyle\oint}%
\def\ExitTCILatex{\makeatother }
\if@compatibility\message{amsmath already loaded}\fi\aftergroup\ExitTCILatex}
\if@compatibility\message{amstex already loaded}\fi\aftergroup\ExitTCILatex}
\if@compatibility\message{amsgen already loaded}\fi\aftergroup\ExitTCILatex}
\let\DOTSI\relax
\def\RIfM@{\relax\ifmmode}%
\def\FN@{\futurelet\next}%
\def\iint{\DOTSI\intno@\tw@\FN@\ints@}%
\def\iiint{\DOTSI\intno@\thr@@\FN@\ints@}%
\def\iiiint{\DOTSI\intno@4 \FN@\ints@}%
\def\idotsint{\DOTSI\intno@\z@\FN@\ints@}%
\def\ints@{\findlimits@\ints@@}%
\newif\iflimtoken@
\newif\iflimits@
\def\findlimits@{\limtoken@true\ifx\next\limits\limits@true
 \else\ifx\next\nolimits\limits@false\else
 \limtoken@false\ifx\ilimits@\nolimits\limits@false\else
 \ifinner\limits@false\else\limits@true\fi\fi\fi\fi}%
\def\multint@{\int\ifnum\intno@=\z@\intdots@                          
 \else\intkern@\fi                                                    
 \ifnum\intno@>\tw@\int\intkern@\fi                                   
 \ifnum\intno@>\thr@@\int\intkern@\fi                                 
 \int}
\def\multintlimits@{\intop\ifnum\intno@=\z@\intdots@\else\intkern@\fi
 \ifnum\intno@>\tw@\intop\intkern@\fi
 \ifnum\intno@>\thr@@\intop\intkern@\fi\intop}%
\def\intic@{%
    \mathchoice{\hskip.5em}{\hskip.4em}{\hskip.4em}{\hskip.4em}}%
\def\negintic@{\mathchoice
 {\hskip-.5em}{\hskip-.4em}{\hskip-.4em}{\hskip-.4em}}%
\def\ints@@{\iflimtoken@                                              
 \def\ints@@@{\iflimits@\negintic@
   \mathop{\intic@\multintlimits@}\limits                             
  \else\multint@\nolimits\fi                                          
  \eat@}
 \else                                                                
 \def\ints@@@{\iflimits@\negintic@
  \mathop{\intic@\multintlimits@}\limits\else
  \multint@\nolimits\fi}\fi\ints@@@}%
\def\intkern@{\mathchoice{\!\!\!}{\!\!}{\!\!}{\!\!}}%
\def\plaincdots@{\mathinner{\cdotp\cdotp\cdotp}}%
\def\intdots@{\mathchoice{\plaincdots@}%
 {{\cdotp}\mkern1.5mu{\cdotp}\mkern1.5mu{\cdotp}}%
 {{\cdotp}\mkern1mu{\cdotp}\mkern1mu{\cdotp}}%
 {{\cdotp}\mkern1mu{\cdotp}\mkern1mu{\cdotp}}}%
\def\RIfM@{\relax\protect\ifmmode}
\def\text{\RIfM@\expandafter\text@\else\expandafter\mbox\fi}
\let\nfss@text\text
\def\text@#1{\mathchoice
   {\textdef@\displaystyle\f@size{#1}}%
   {\textdef@\textstyle\tf@size{\firstchoice@false #1}}%
   {\textdef@\textstyle\sf@size{\firstchoice@false #1}}%
   {\textdef@\textstyle \ssf@size{\firstchoice@false #1}}%
   \glb@settings}
\def\textdef@#1#2#3{\hbox{{%
                    \everymath{#1}%
                    \let\f@size#2\selectfont
                    #3}}}
\newif\iffirstchoice@
\def\Let@{\relax\iffalse{\fi\let\\=\cr\iffalse}\fi}%
\def\vspace@{\def\vspace##1{\crcr\noalign{\vskip##1\relax}}}%
\def\multilimits@{\bgroup\vspace@\Let@
 \baselineskip\fontdimen10 \scriptfont\tw@
 \advance\baselineskip\fontdimen12 \scriptfont\tw@
 \lineskip\thr@@\fontdimen8 \scriptfont\thr@@
 \lineskiplimit\lineskip
 \vbox\bgroup\ialign\bgroup\hfil$\m@th\scriptstyle{##}$\hfil\crcr}%
\def\Sb{_\multilimits@}%
\def\endSb{\crcr\egroup\egroup\egroup}%
\def\Sp{^\multilimits@}%
\newdimen\ex@
\def\rightarrowfill@#1{$#1\m@th\mathord-\mkern-6mu\cleaders
 \hbox{$#1\mkern-2mu\mathord-\mkern-2mu$}\hfill
 \mkern-6mu\mathord\rightarrow$}%
\def\leftarrowfill@#1{$#1\m@th\mathord\leftarrow\mkern-6mu\cleaders
 \hbox{$#1\mkern-2mu\mathord-\mkern-2mu$}\hfill\mkern-6mu\mathord-$}%
\def\leftrightarrowfill@#1{$#1\m@th\mathord\leftarrow
\mkern-6mu\cleaders
 \hbox{$#1\mkern-2mu\mathord-\mkern-2mu$}\hfill
 \mkern-6mu\mathord\rightarrow$}%
\def\overrightarrow{\mathpalette\overrightarrow@}%
\def\overrightarrow@#1#2{\vbox{\ialign{##\crcr\rightarrowfill@#1\crcr
 \noalign{\kern-\ex@\nointerlineskip}$\m@th\hfil#1#2\hfil$\crcr}}}%
\def\overleftarrow{\mathpalette\overleftarrow@}%
\def\overleftarrow@#1#2{\vbox{\ialign{##\crcr\leftarrowfill@#1\crcr
 \noalign{\kern-\ex@\nointerlineskip}$\m@th\hfil#1#2\hfil$\crcr}}}%
\def\overleftrightarrow{\mathpalette\overleftrightarrow@}%
\def\overleftrightarrow@#1#2{\vbox{\ialign{##\crcr
   \leftrightarrowfill@#1\crcr
 \noalign{\kern-\ex@\nointerlineskip}$\m@th\hfil#1#2\hfil$\crcr}}}%
\def\underrightarrow{\mathpalette\underrightarrow@}%
\def\underrightarrow@#1#2{\vtop{\ialign{##\crcr$\m@th\hfil#1#2\hfil
  $\crcr\noalign{\nointerlineskip}\rightarrowfill@#1\crcr}}}%
\def\underleftarrow{\mathpalette\underleftarrow@}%
\def\underleftarrow@#1#2{\vtop{\ialign{##\crcr$\m@th\hfil#1#2\hfil
  $\crcr\noalign{\nointerlineskip}\leftarrowfill@#1\crcr}}}%
\def\underleftrightarrow{\mathpalette\underleftrightarrow@}%
\def\underleftrightarrow@#1#2{\vtop{\ialign{##\crcr$\m@th
  \hfil#1#2\hfil$\crcr
 \noalign{\nointerlineskip}\leftrightarrowfill@#1\crcr}}}%
\def\qopnamewl@#1{\mathop{\operator@font#1}\nlimits@}
\let\nlimits@\displaylimits
\def\setboxz@h{\setbox\z@\hbox}
\def\varlim@#1#2{\mathop{\vtop{\ialign{##\crcr
 \hfil$#1\m@th\operator@font lim$\hfil\crcr
 \noalign{\nointerlineskip}#2#1\crcr
 \noalign{\nointerlineskip\kern-\ex@}\crcr}}}}
 \def\rightarrowfill@#1{\m@th\setboxz@h{$#1-$}\ht\z@\z@
  $#1\copy\z@\mkern-6mu\cleaders
  \hbox{$#1\mkern-2mu\box\z@\mkern-2mu$}\hfill
  \mkern-6mu\mathord\rightarrow$}
\def\leftarrowfill@#1{\m@th\setboxz@h{$#1-$}\ht\z@\z@
  $#1\mathord\leftarrow\mkern-6mu\cleaders
  \hbox{$#1\mkern-2mu\copy\z@\mkern-2mu$}\hfill
  \mkern-6mu\box\z@$}
\def\projlim{\qopnamewl@{proj\,lim}}
\def\injlim{\qopnamewl@{inj\,lim}}
\def\varinjlim{\mathpalette\varlim@\rightarrowfill@}
\def\varprojlim{\mathpalette\varlim@\leftarrowfill@}
\def\varliminf{\mathpalette\varliminf@{}}
\def\varliminf@#1{\mathop{\underline{\vrule\@depth.2\ex@\@width\z@
   \hbox{$#1\m@th\operator@font lim$}}}}
\def\varlimsup{\mathpalette\varlimsup@{}}
\def\varlimsup@#1{\mathop{\overline
  {\hbox{$#1\m@th\operator@font lim$}}}}
\def\align{\@verbatim \frenchspacing\@vobeyspaces \@alignverbatim
You are using the "align" environment in a style in which it is not defined.}
\let\csname endalign*\endcsname =\endtrivlist
\def\alignat{\@verbatim \frenchspacing\@vobeyspaces \@alignatverbatim
You are using the "alignat" environment in a style in which it is not defined.}
\let\csname endalignat*\endcsname =\endtrivlist
\def\xalignat{\@verbatim \frenchspacing\@vobeyspaces \@xalignatverbatim
You are using the "xalignat" environment in a style in which it is not defined.}
\let\csname endxalignat*\endcsname =\endtrivlist
\def\gather{\@verbatim \frenchspacing\@vobeyspaces \@gatherverbatim
You are using the "gather" environment in a style in which it is not defined.}
\let\csname endgather*\endcsname =\endtrivlist
\def\multiline{\@verbatim \frenchspacing\@vobeyspaces \@multilineverbatim
You are using the "multiline" environment in a style in which it is not defined.}
\let\csname endmultiline*\endcsname =\endtrivlist
\def\arrax{\@verbatim \frenchspacing\@vobeyspaces \@arraxverbatim
You are using a type of "array" construct that is only allowed in AmS-LaTeX.}
\def\tabulax{\@verbatim \frenchspacing\@vobeyspaces \@tabulaxverbatim
You are using a type of "tabular" construct that is only allowed in AmS-LaTeX.}
\let\csname endarrax*\endcsname =\endtrivlist
\let\csname endtabulax*\endcsname =\endtrivlist
 \def\endequation{%
     \ifmmode\ifinner 
      \iftag@
        \addtocounter{equation}{-1} 
        $\hfil
           \displaywidth\linewidth\@taggnum\egroup \endtrivlist
        \global\tag@false
        \global\@ignoretrue   
      \else
        $\hfil
           \displaywidth\linewidth\@eqnnum\egroup \endtrivlist
        \global\tag@false
        \global\@ignoretrue 
      \fi
     \else   
      \iftag@
        \addtocounter{equation}{-1} 
        \eqno \hbox{\@taggnum}
        \global\tag@false%
        $$\global\@ignoretrue
      \else
        \eqno \hbox{\@eqnnum}
        $$\global\@ignoretrue
      \fi
     \fi\fi
 } 
 \newif\iftag@ \tag@false
 \def\TCItag{\@ifnextchar*{\@TCItagstar}{\@TCItag}}
 \def\@TCItag#1{%
     \global\tag@true
     \global\def\@taggnum{(#1)}%
     \global\def\@currentlabel{#1}}
 \def\@TCItagstar*#1{%
     \global\tag@true
     \global\def\@taggnum{#1}%
     \global\def\@currentlabel{#1}}
     \def\tag{\@ifnextchar*{\@tagstar}{\@tag}}
     \def\@tag#1{%
         \global\tag@true
         \global\def\@taggnum{(#1)}}
     \def\@tagstar*#1{%
         \global\tag@true
         \global\def\@taggnum{#1}}
\begin{document}

\title{Decoding by Sampling: A Randomized Lattice Algorithm for Bounded Distance Decoding}

\pubid{}
\specialpapernotice{}
\author{Shuiyin~Liu,~Cong~Ling,~and~Damien~Stehl\'{e}\thanks{%
This work was presented in part at the IEEE International Symposium on Information Theory (ISIT
2010), Austin, Texas, US, June 2010. The third author was partly funded by the Australian Research
Council Discovery Project
DP0880724.}\thanks{%
S. Liu and C. Ling are with the Department of Electrical and Electronic Engineering, Imperial
College London, London SW7 2AZ, United Kingdom
(e-mail: shuiyin.liu06@imperial.ac.uk, cling@ieee.org).} \thanks{%
D. Stehl\'{e} is with CNRS, Laboratoire LIP (U.\ Lyon, CNRS, ENS de Lyon, INRIA, UCBL), 46 all\'ee d'Italie, 69364 Lyon Cedex 07, France (e-mail:
damien.stehle@ens-lyon.fr).}}




\maketitle%

\begin{abstract}%

Despite its reduced complexity, lattice reduction-aided decoding exhibits a widening gap to
maximum-likelihood (ML) performance as the dimension increases. To improve its performance, this
paper presents randomized lattice decoding based on Klein's sampling technique, which is a
randomized version of Babai's nearest plane algorithm (i.e., successive interference cancelation
(SIC)). To find the closest lattice point, Klein's algorithm is used to sample some lattice points
and the closest among those samples is chosen. Lattice reduction increases the probability of
finding the closest lattice point, and only needs to be run once during pre-processing. Further,
the sampling can operate very efficiently in parallel. The technical contribution of this paper is
two-fold: we analyze and optimize the decoding radius of sampling decoding resulting in better
error performance than Klein's original algorithm, and propose a very efficient implementation of
random rounding. Of particular interest is that a fixed gain in the decoding radius compared to
Babai's decoding can be achieved at polynomial complexity. The proposed decoder is useful for
moderate dimensions where sphere decoding becomes computationally intensive, while lattice
reduction-aided decoding starts to suffer considerable loss. Simulation results demonstrate
near-ML performance is achieved by a moderate number of samples, even if the dimension is as high
as 32.

\end{abstract}%

\section{INTRODUCTION}

Decoding for the linear multi-input multi-output (MIMO) channel is a problem
of high relevance in multi-antenna, cooperative and other multi-terminal
communication systems. The computational complexity associated with
maximum-likelihood (ML) decoding poses significant challenges for hardware
implementation. When the codebook forms a lattice, ML decoding corresponds
to solving the closest lattice vector problem (CVP). The worst-case
complexity for solving the CVP optimally for generic lattices is
non-deterministic polynomial-time (NP)-hard.
The best CVP algorithms to date are Kannan's \cite{kannan87} which has be shown to be of
complexity $n^{n/2+o(n)}$ where $n$ is the lattice dimension (see \cite{Stehle07}) and whose space
requirement is polynomial in~$n$, and the recent algorithm by Micciancio and
Voulgaris~\cite{MiVo10} which has complexity~$2^{O(n)}$ with respect to both time and space. In
digital communications, a finite subset of the lattice is used due to the power constraint. ML
decoding for a finite (or infinite) lattice can be realized efficiently by sphere decoding
\cite{Damen,viterbo,agrell}, whose
average complexity grows exponentially with $n$ for any fixed SNR \cite%
{jalden}. This limits sphere decoding to low dimensions in practical applications. The decoding
complexity is especially felt in coded systems. For instance, to decode the $%
4\times 4$ perfect code \cite{Oggier} using the 64-QAM constellation, one has to search in a
32-dimensional (real-valued) lattice; from \cite{jalden}, sphere decoding requires a complexity of
$64^{32\gamma }$ with some $\gamma \in \left( 0,1\right] $, which could be huge. Although some
fast-decodable codes have been proposed recently \cite{Biglieri09}, the decoding still relies on
sphere decoding.


Thus, we often have to resort to approximate solutions. The problem of solving CVP approximately
was first addressed by Babai in \cite{Babai}, which in essence applies zero-forcing (ZF) or
successive interference cancelation (SIC) on a reduced lattice. This technique is often referred
to as lattice-reduction-aided decoding \cite{yao,Windpassinger2}. It is known that ZF or minimum
mean square error (MMSE) detection aided by Lenstra, Lenstra and Lov\'{a}sz (LLL) reduction
achieves full diversity in uncoded MIMO fading channels \cite{Taherzadeh:IT,XiaoliMa08} and that
lattice-reduction-aided decoding has a performance gap to (infinite) lattice decoding depending on
the dimension $n$ only \cite{LingIT07}.
It was further shown in \cite{Jalden:LR-MMSE} that MMSE-based lattice-reduction aided decoding
achieves the optimal diversity and spatial multiplexing tradeoff. In \cite{wubbenMMSE}, it was
shown that Babai's decoding using MMSE can provide near-ML performance for small-size MIMO
systems. However, the analysis in \cite{LingIT07} revealed a widening gap to ML decoding. In
particular, both the worst-case bound and experimental gap for LLL reduction are exponential with
dimension $n$ (or linear with $n$ if measured in dB).


In this work, we present sampling decoding to narrow down the gap between lattice-reduction-aided
SIC and sphere decoding. We use Klein's sampling algorithm \cite{Klein}, which is a randomized
version of Babai's nearest plane algorithm (i.e., SIC). The core of Klein's algorithm is
randomized rounding which generalizes the standard rounding by not necessarily rounding to the
nearest integer. Thus far, Klein's algorithm has mostly remained a theoretic tool in the lattice
literature, while we are unaware of any experimental work for Klein's algorithm in the MIMO
literature. In this paper, we sample some lattice points by using Klein's algorithm and choose the
closest from the list of sampled lattice points. By varying the list size $K$, it enjoys a
flexible tradeoff between complexity and performance. Klein applied his algorithm to find the
closest lattice point only when it is very close to the input vector: this technique is known as
\emph{bounded-distance decoding} (BDD) in coding literature. The performance of BDD is best
captured by the \emph{correct decoding radius} (or simply decoding radius), which is defined as
the radius of a sphere centered at the lattice point within which decoding is guaranteed to be
correct\footnote{Although we do not have the restriction of being very close in this paper, there
is no guarantee of correct decoding beyond the decoding radius.}.

The technical contribution of this paper is two-fold: we analyze and optimize the performance of
sampling decoding which leads to improved error performance than the original Klein algorithm, and
propose a very efficient implementation of Klein's random rounding, resulting in reduced decoding
complexity. In particular, we show that sampling decoding can achieve any fixed gain in the
decoding radius (over Babai's decoding) at polynomial complexity. Although a fixed gain is
asymptotically vanishing with respect to the exponential proximity factor of LLL reduction, it
could be significant for the dimensions of interest in the practice of MIMO. In particular,
simulation results demonstrate that near-ML performance is achieved by a moderate number of
samples for dimension up to 32. The performance-complexity tradeoff of sampling decoding is
comparable to that of the new decoding algorithms proposed in \cite{Othman,Luzzi10} very recently.
A byproduct is that boundary errors for finite constellations can be partially compensated if we
discard the samples falling outside of the constellation.

Sampling decoding distinguishes itself from previous list-based detectors \cite{waters:chase,
Windpassinger1, Karen, choi:SIC,Shimokawa} in several ways. Firstly, the way it builds its list is
distinct. More precisely, it randomly samples lattice points with a discrete Gaussian distribution
centered at the received signal and returns the closest among them. A salient feature is that it
will sample a closer lattice point with higher probability. Hence, our sampling decoding is more
likely to find the closest lattice point than \cite{choi:SIC} where a list of candidate lattice
points is built in the vicinity of the SIC output point. Secondly, the expensive lattice reduction
is only performed once during pre-processing. In \cite{Windpassinger1}, a bank of $2n$ parallel
lattice reduction-aided detectors was used. The coset-based lattice detection scheme in
\cite{Karen}, as well as the iterative lattice reduction detection scheme \cite{Shimokawa}, also
needs lattice reduction many times. Thirdly, sampling decoding enjoys a proven gain given the list
size $K$; all previous schemes might be viewed as various heuristics apparently
without such proven gains. 
Note that list-based detectors (including our algorithm) may prove useful in
the context of incremental lattice decoding~\cite{NJD10}, as it provides a
fall-back strategy when SIC starts failing due to the variation of the
lattice.

It is worth mentioning that Klein's sampling technique is emerging as a
fundamental building block in a number of new lattice algorithms \cite%
{NgVi08, Gentry08}. Thus, our analysis and implementation may benefit those
algorithms as well.

The paper is organized as follows: Section II presents the transmission model and lattice
decoding, followed by a description of Klein's sampling algorithm in Section III. In Section IV
the fine-tuning and analysis of sampling decoding is given, and the efficient implementation and
extensions to complex-valued systems, MMSE and soft-output decoding are proposed in Section V.
Section VI evaluates the performance and complexity by computer simulation. Some concluding
remarks are offered in Section VII.

\textit{Notation:} Matrices and column vectors are denoted by upper and lowercase boldface
letters, and the transpose, inverse, pseudoinverse
of a matrix $\mathbf{B}$ by $\mathbf{B}^{T}$, $\mathbf{B}^{-1}$, and $%
\mathbf{B}^{\dagger }$, respectively. $\mathbf{I}$ is the identity matrix.
We denote $\mathbf{b}_{i}$ for the $i$-th column of matrix $\mathbf{B}$, $%
b_{i,j}$ for the entry in the $i$-th row and $j$-th column of the matrix $%
\mathbf{B}$, and $b_{i}$ for the $i$-th entry in vector $\mathbf{b}$. Vec$(%
\mathbf{B})$ stands for the column-by-column vectorization of the matrices $%
\mathbf{B}$. The inner product in the Euclidean space between vectors $%
\mathbf{u}$ and $\mathbf{v}$ is defined as $\langle \mathbf{u},\mathbf{v}%
\rangle =\mathbf{u}^{T}\mathbf{v}$, and the Euclidean length $\Vert \mathbf{u%
}\Vert =\sqrt{\langle \mathbf{u},\mathbf{u}\rangle }$. Kronecker product of
matrix $\mathbf{A}$ and $\mathbf{B}$ is written as $\mathbf{A}\otimes
\mathbf{B}$. $\lceil x\rfloor $ rounds to a closest integer, while $\lfloor
x\rfloor $ to the closest integer smaller than or equal to $x$ and $%
\left\lceil x\right\rceil $ to the closest integer larger than or equal to $x $. The
$\mathfrak{\Re }$ and $\mathfrak{\Im }$ prefixes denote the real and imaginary parts. A circularly
symmetric complex Gaussian random variable $x$ with variance $\sigma ^{2}$\ is defined as
$x\backsim \mathcal{CN}\left( 0,\sigma ^{2}\right) $. We write $\triangleq $ for equality in
definition. We use the standard asymptotic notation $%
f\left( x\right) =O\left( g\left( x\right) \right) $ when $\lim\sup_{x\rightarrow
\infty}|f(x)/g(x)| < \infty$ , $f\left( x\right) =\Omega \left( g\left( x\right) \right) $ when
$\lim\sup_{x\rightarrow \infty}|g(x)/f(x)| < \infty$, and $%
f\left( x\right) =o\left( g\left( x\right) \right) $ when $\lim\sup_{x\rightarrow
\infty}|f(x)/g(x)| =0$ . Finally, in this paper, the computational complexity is measured by the
number of arithmetic operations.


\section{LATTICE CODING AND DECODING}

Consider an $n_{T}\times n_{R}$ flat-fading MIMO system model consisting of $%
n_{T}$ transmitters and $n_{R}$ receivers%
\begin{equation}  \label{MIMO}
\mathbf{Y=HX+N},
\end{equation}%
where $\mathbf{X}\in \mathbb{C}^{n_{T}\times T}$, $\mathbf{Y}$, $\mathbf{N}%
\in \mathbb{C}^{n_{R}\times T}$ of block length $T$ denote the channel
input, output and noise, respectively, and $\mathbf{H}\in \mathbb{C}%
^{n_{R}\times n_{T}}$ is the $n_{R}\times n_{T}$ full-rank channel gain
matrix with $n_{R}\geq n_{T}$, all of its elements are i.i.d. complex
Gaussian random variables $\mathcal{CN}\left( 0,1\right) $. The entries of $%
\mathbf{N}$ are i.i.d. complex Gaussian with variance $\sigma ^{2}$ each.
The codewords $\mathbf{X}$ satisfy the average power constraint $E[\|\mathbf{%
X}\|^2_{\text{F}}/T] = 1$. Hence, the signal-to-noise ratio (SNR) at each
receive antenna is $1/\sigma^2$.

When a lattice space-time block code is employed, the codeword $\mathbf{X}$
is obtained by forming a $n_{T}\times T$ matrix from vector $\mathbf{s}\in
\mathbb{C}^{n_{T}T}$, where $\mathbf{s}$ is obtained by multiplying $%
n_{T}T\times 1$ QAM vector $\mathbf{x}$ by the $%
n_{T}T\times n_{T}T$ generator matrix $\mathbf{G}$ of the encoding lattice, i.e.,
$\mathbf{s}=\mathbf{Gx}$. By column-by-column vectorization of the matrices $\mathbf{Y}$ and
$\mathbf{N}$ in (\ref{MIMO}),
i.e., $\mathbf{y}=\text{Vec}(\mathbf{Y})$ and $\mathbf{n}=\text{Vec}(\mathbf{%
N})$, the received signal at the destination can be expressed as%
\begin{equation}
\mathbf{y=}\left( \mathbf{I}_{T}\otimes \mathbf{H}\right) \mathbf{Gx+n}\text{%
.}  \label{CodedSystem}
\end{equation}

When $T=1$ and $\mathbf{G=I}_{n_T}$, (\ref{CodedSystem}) reduces to the
model for uncoded MIMO communication $\mathbf{y}=\mathbf{Hx}+\mathbf{n}$.
Further, we can equivalently write%
\begin{equation}
\left[
\begin{array}{c}
\mathfrak{\Re }\mathbf{y} \\
\mathfrak{\Im }\mathbf{y}%
\end{array}%
\right] =\left[
\begin{array}{cc}
\mathfrak{\Re }\mathbf{H} & -\mathfrak{\Im }\mathbf{H} \\
\mathfrak{\Im }\mathbf{H} & \mathfrak{\Re }\mathbf{H}%
\end{array}%
\right] \left[
\begin{array}{c}
\mathfrak{\Re }\mathbf{x} \\
\mathfrak{\Im }\mathbf{x}%
\end{array}%
\right] +\left[
\begin{array}{c}
\mathfrak{\Re }\mathbf{n} \\
\mathfrak{\Im }\mathbf{n}%
\end{array}%
\right] ,  \label{RealModel}
\end{equation}%
which gives an equivalent $2n_{T}\times 2n_{R}$ real-valued model. We can
also obtain an equivalent $2n_{T}T\times 2n_{R}T$ real model for coded MIMO
like (\ref{RealModel}). The QAM constellations $\mathcal{C}$ can be
interpreted as the shift and scaled version of a finite subset $\mathcal{A}%
^{n_{T}}$ of the integer lattice $\mathbb{Z}^{n_{T}}$, i.e., $\mathcal{C}=a(%
\mathcal{A}^{n_{T}}+[1/2,...,1/2]^{T})$, where the factor $a$ arises from
energy normalization. For example, we have $\mathcal{A}^{n_{T}}=\{-\sqrt{M}%
/2,...,\sqrt{M}/2-1\}$ for \textit{M}-QAM signalling. 



Therefore, with scaling and shifting, we consider the canonical $n \times m$ ($%
m \geq n$) real-valued MIMO system model
\begin{equation}  \label{MIMOmodel}
\mathbf{y}=\mathbf{Bx}+\mathbf{n}
\end{equation}
where $\mathbf{B} \in \mathbb{R}^{m \times n}$, given by the real-valued
equivalent of $\left( \mathbf{I}_{T}\otimes \mathbf{H}\right) \mathbf{G}$,
can be interpreted as the basis matrix of the decoding lattice. Obviously, $%
n=2n_{T}T$ and $m= 2n_{R}T$. The data vector $\mathbf{x}$ is drawn from a
finite subset $\mathcal{A}^n$ to satisfy the power constraint.


A lattice in the $m$-dimensional Euclidean space $\mathbb{R}^{m}$ is
generated as the integer linear combination of the set of linearly
independent vectors \cite{gruber, cassels}:%
\begin{equation}
\mathcal{L\triangleq L}\left( \mathbf{B}\right) \mathbf{=}\left\{
\sum_{i=1}^{n}x_{i}\mathbf{b}_{i}\left\vert x_{i}\in \mathbb{Z}\text{, }%
i=1,\ldots n\right. \right\} ,
\end{equation}%
where $\mathbb{Z}$ is the set of integers, and $\mathbf{B=}\left[ \mathbf{b}%
_{1}\cdots \mathbf{b}_{n}\right] $ represents a basis of the lattice $%
\mathcal{L}$. In the matrix form, $\mathcal{L=}\left\{ \mathbf{Bx}\text{ : }%
\mathbf{x\in }\text{ }\mathbb{Z}^{n}\right\} $. The lattice has infinitely
many different bases other than $\mathbf{B}$. In general, a matrix $\mathbf{%
{B'}=BU}$, where $\mathbf{U}$ is an \textit{unimodular} matrix, i.e., $%
\det \mathbf{U=\pm }1$ and all elements of $\mathbf{U}$ are integers, is
also a basis of $\mathcal{L}$.

Since the vector $\mathbf{Bx}$ can be viewed as a lattice point, MIMO
decoding can be formulated as a lattice decoding problem. The ML decoder
computes
\begin{equation}
\mathbf{\hat{x}}=\arg \min_{\mathbf{x}\in \mathcal{A}^{n}}\Vert \mathbf{y}-%
\mathbf{Bx}\Vert ^{2}.  \label{ML}
\end{equation}%
which amounts to solving a closest-vector problem (CVP) in a finite subset of lattice
$\mathcal{L}$.\ ML decoding may be accomplished by the sphere decoding. However, the expected
complexity of sphere decoding is exponential for fixed SNR \cite{jalden}.

A promising approach to reducing the computational complexity of sphere
decoding is to relax the finite lattice to the infinite lattice and to solve
\begin{equation}
\mathbf{\hat{x}}=\arg \min_{\mathbf{x}\in \mathbb{Z}^{n}}\Vert \mathbf{y}-%
\mathbf{Bx}\Vert ^{2}.  \label{FML}
\end{equation}%
which could benefit from lattice reduction. This technique is sometimes referred to as infinite
lattice decoding (ILD). The downside is that the found lattice point will not necessarily be a
valid point in the constellation.

This search can be carried out more efficiently by lattice reduction-aided
decoding \cite{Windpassinger2}. The basic idea behind this is to use lattice
reduction in conjunction with traditional low-complexity decoders. With
lattice reduction, the basis $\mathbf{B}$ is transformed into a new basis
consisting of roughly orthogonal vectors
\begin{equation}
\mathbf{B^{\prime }}=\mathbf{BU}  \label{LRaided}
\end{equation}%
where $\mathbf{U}$ is a unimodular matrix. Indeed, we have the equivalent
channel model
\begin{equation*}
\mathbf{y}=\mathbf{B^{\prime }U}^{-1}\mathbf{x}+\mathbf{n}=\mathbf{B^{\prime
}x^{\prime }}+\mathbf{n},\quad \mathbf{x^{\prime }}=\mathbf{U^{-1}x}.
\end{equation*}%
Then conventional decoders (ZF or SIC) are applied on the reduced basis.
This estimate is then transformed back into $\mathbf{\hat{x}}=\mathbf{U\hat{x%
}^{\prime }}$. Since the equivalent channel is much more likely to be
well-conditioned, the effect of noise enhancement will be moderated. Again,
as the resulting estimate $\mathbf{\hat{x}}$ is not necessarily in $\mathcal{A}%
^{n}$, remapping of $\mathbf{\hat{x}}$ onto the finite lattice $\mathcal{A}%
^{n}$ is required whenever $\mathbf{\hat{x}}\notin \mathcal{A}^{n}$.

Babai pre-processed the basis with lattice reduction, then applied either
the rounding off (i.e., ZF) or nearest plane algorithm (i.e., SIC) \cite%
{Babai}. For SIC, one performs the QR decomposition $\mathbf{B}=\mathbf{QR}$%
, where $\mathbf{Q}$ has orthogonal columns and $\mathbf{R}$ is an upper
triangular matrix with positive
diagonal elements \cite{Horn}.
Multiplying (\ref{MIMOmodel}) on the left with $\mathbf{Q^{\dagger }}$ we
have
\begin{equation}
\mathbf{y}^{\prime }=\mathbf{Q^{\dagger }y}=\mathbf{Rx}+\mathbf{n}^{\prime }.
\label{SICmodel}
\end{equation}%
In SIC, the last symbol $x_{n}$ is estimated first as $\hat{x}_{n}=\lceil
y_{n}^{\prime }/r_{n,n}\rfloor $. Then the estimate is substituted to remove
the interference term in $y_{n-1}^{\prime }$ when $x_{n-1}$ is being
estimated. The procedure is continued until the first symbol is detected.
That is, we have the following recursion:
\begin{equation}
\hat{x}_{i}=\left\lceil \frac{y_{i}^{\prime }-\sum_{j=i+1}^{n}r_{i,j}\hat{x}%
_{j}}{r_{i,i}}\right\rfloor  \label{SIC}
\end{equation}%
for $i=n,n-1,...,1$.





Let $\mathbf{\hat{b}}_{1}$,...,$\mathbf{\hat{b}}%
_{n}$ be the Gram-Schmidt vectors where $\mathbf{\hat{b}}_{i}$ is the projection of
$\mathbf{b}_{i}$
orthogonal to the vector space generated by $\mathbf{b}_{1}$,...,$\mathbf{b}%
_{i-1}$. These are the vectors found by the Gram-Schmidt algorithm for orthogonalization.
Gram-Schmidt orthogonalization is closely related to QR decomposition. More precisely, one has the
relations $\mathbf{\hat{b}}_{i}=$ $r_{i,i}\cdot \mathbf{q}_{i}$, where $\mathbf{q}_{i}$ is the
$i$-th column of $\mathbf{Q}$. It is known that SIC finds the closest vector if the distance from
input vector $\mathbf{y}$ to the lattice $\mathcal{L}$ is less than half the length of the
shortest Gram-Schmidt vector. In other words, the \textit{correct decoding radius} for SIC is
given by
\begin{equation}
R_{\text{SIC}}=\frac{1}{2}\min_{1\leq i\leq n}\Vert \mathbf{\hat{b}}%
_{i}\Vert =\frac{1}{2}\min_{1\leq i\leq n}r_{i,i}.  \label{dbabai}
\end{equation}%

The \textit{proximity factor} defined in \cite{LingIT07} quantifies the worst-case loss in the
correct decoding radius relative to ILD%
\begin{equation}
F_{\text{SIC}}\triangleq \frac{R_{\text{ILD}}^{2}}{R_{\text{SIC}}^{2}},
\end{equation}%
where the correct decoding radius for ILD is $R_{\text{ILD}}=\lambda_1/2$ ($\lambda_1$ is the minimum distance,
 or the length of a shortest nonzero vector of the
lattice $\mathcal{L}$) and showed that under LLL reduction%
\begin{equation}
F_{\text{SIC}}\leq \beta ^{n-1},\quad \beta =\left( \delta -1/4\right) ^{-1} \label{PF-LLL}
\end{equation}%
where $1/4<\delta \leq 1$ is a parameter associated with LLL reduction \cite%
{LLL}. Note that the average-case gap for random bases $%
\mathbf{B}$ is smaller. Yet it was observed experimentally in \cite{Damien06,LingIT07} that the
average-case proximity (or approximation) factor remains exponential for random lattices.
Meanwhile, if one applies dual KZ reduction, then \cite{LingIT07}
\begin{equation}
F_{\text{SIC}}\leq n^{2}.  \label{PF-DKZ}
\end{equation}%
Again, the worst-case loss relative to ILD widens with $n$.

These finite proximity factors imply that lattice reduction-aided SIC is an instance of BDD. More
precisely, the $1/(2\gamma)$-BDD problem is to find the closest lattice point given that the
distance between input $\mathbf{y}$ and lattice $\mathcal{L}$ is less than $\lambda_1/(2\gamma)$.
It is easy to see that a decoding algorithm with proximity factor $F$ corresponds to
$1/(2\sqrt{F})$-BDD.



\section{SAMPLING DECODING}

Klein \cite{Klein} proposed a randomized BDD algorithm that increased the correct decoding radius
to
\begin{equation*}
R_{\text{Klein}}=k\min_{1\leq i\leq n}r_{i,i}.
\end{equation*}%
For the algorithm to be useful, the parameter $k$ should fall into the range $1/2<k<\sqrt{n/2}$;
in other regions Babai and Kannan's algorithms would be more efficient. Its complexity is $n^{k^{2}+O(1)}$ which for fixed~$k$ is polynomial in $n$ as $n\rightarrow \infty$.




In essence, Klein's algorithm is a randomized version of SIC, where standard rounding in SIC is
replaced by randomized rounding. Klein described his randomized algorithm in the recursive form.
Here, we rewrite it into the non-recursive form more familiar to the communications community. It
is
summarized by the pseudocode of the function Rand\_SIC$_{A}\left( \mathbf{y}%
^{\prime }\right) $ in Table I. We assume that the pre-processing of (\ref{SICmodel}) has been
done, hence the input $\mathbf{y}^{\prime }=\mathbf{Q^{\dagger }y}$ rather than $\mathbf{y}$. This
will reduce the complexity since we will call it many times. The important parameter $A$
determines the amount of randomness, and Klein suggested $A=\log n/\min_{i}r_{i,i}^{2}$.


\begin{table}\renewcommand{\arraystretch}{1.5}  \centering%
\caption{Pseudocode for the randomized SIC in sequential form}%
\begin{tabular}{l}
\hline\hline
\textbf{Function }Rand\_SIC$_{A}\left( \mathbf{y^{\prime }}\right) $ \\
1:\ \textbf{for }$i=n$ to $1$ do \\
2:\ \ \ $c_{i}\longleftarrow Ar_{i,i}^{2}$ \\
3: \ \ $\hat{x}_{i}\longleftarrow $ Rand\_Round$_{c_{i}}\left( ( {y}%
_{i}^{\prime }-\sum_{j=i+1}^{n}r_{i,j}\hat{x}_{j}) /r_{i,i}\right) $ \\
4:\ \textbf{end for} \\
5: \textbf{return }$\mathbf{\hat{x}}$ \\ \hline
\end{tabular}%
\label{table_iterative copy(1)}%
\end{table}%

The randomized SIC randomly samples a lattice point $\mathbf{z}$ that is close to $\mathbf{y}$. To
obtain the closest lattice point, one calls Rand\_SIC $K$ times and chooses the closest among
those lattice points returned, with a sufficiently large $K$. The function Rand\_Round$_{c}(r)$
rounds $r$ randomly to an integer $Q$
according to the following \textit{discrete Gaussian distribution} \cite%
{Klein}
\begin{equation}
P(Q=q)=e^{-c(r-q)^{2}}/s,\quad s=\sum_{q=-\infty }^{\infty }{e^{-c(r-q)^{2}}}%
.  \label{GaussDistr}
\end{equation}%
If $c$ is large, Rand\_Round reduces to standard rounding (i.e., decision is
confident); if $c$ is small, it makes a guess (i.e., decision is
unconfident).


\begin{lemma}
(\cite{Klein}) $s\leq s(c) \triangleq \sum_{i\geq 0}{%
e^{-ci^{2}}+e^{-c(1+i)^{2}}.}$
\end{lemma}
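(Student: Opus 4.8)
The plan is to recognize the quantity to be bounded as a one–periodic theta–type function of the center $r$ and to show it is maximized at integer centers. Writing $\phi(r)\triangleq\sum_{q=-\infty}^{\infty}e^{-c(r-q)^2}$, observe that $s=\phi(r)$ and that $\phi$ is $1$–periodic and even in $r$. A short computation shows that the proposed bound is nothing but the value of $\phi$ at an integer: $s(c)=\sum_{i\ge0}\big(e^{-ci^2}+e^{-c(1+i)^2}\big)=1+2\sum_{i\ge1}e^{-ci^2}=\phi(0)$. Hence the lemma is equivalent to $\phi(r)\le\phi(0)$ for every real $r$, i.e.\ that centering the Gaussian on a lattice point maximizes the total mass.

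First I would apply the Poisson summation formula to the Schwartz function $g(x)=e^{-cx^2}$, whose Fourier transform is $\widehat{g}(\xi)=\sqrt{\pi/c}\,e^{-\pi^2\xi^2/c}$. Since $\phi(r)=\sum_{q}g(q-r)$, Poisson summation gives
\begin{equation*}
\phi(r)=\sqrt{\tfrac{\pi}{c}}\sum_{k=-\infty}^{\infty}e^{-\pi^2 k^2/c}\,e^{-2\pi i k r}
       =\sqrt{\tfrac{\pi}{c}}\Big(1+2\sum_{k\ge1}e^{-\pi^2 k^2/c}\cos(2\pi k r)\Big),
\end{equation*}
where the second equality pairs the $\pm k$ terms and uses that $\phi$ is real. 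The conclusion is then immediate: every Fourier coefficient $\sqrt{\pi/c}\,e^{-\pi^2 k^2/c}$ is nonnegative, and $\cos(2\pi k r)\le1$ with equality at $r\in\mathbb{Z}$. Bounding each cosine by $1$ yields $\phi(r)\le\sqrt{\pi/c}\,(1+2\sum_{k\ge1}e^{-\pi^2 k^2/c})=\phi(0)=s(c)$, which is exactly the claim.

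I expect the subtlety to be twofold. First, one must justify Poisson summation, but this is routine because both $g$ and $\widehat{g}$ decay faster than any polynomial, so both series converge absolutely. Second, and this is the genuine conceptual obstacle, one should resist attempting a term–by–term comparison of $\phi(r)$ with $\phi(0)$: for non–integer $r$ the two terms corresponding to the two nearest integers both sit at distance $<1$ and are individually ``too large,'' whereas $\phi(0)$ has only a single term at distance $0$. No local pairing of terms succeeds (one can check that even the sum of the two smallest terms of $\phi(r)$ can exceed $1+e^{-c}$ when $c$ is small), so the inequality survives only through the global cancellation encoded by the positivity of the Gaussian's Fourier transform, which is precisely what Poisson summation exposes. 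If only a weaker, non–tight constant were required, one could instead split the sum at the nearest integer and bound each one–sided tail by a shifted geometric–type series, avoiding Poisson entirely; but recovering the exact constant $s(c)=\phi(0)$ essentially forces the Fourier argument.
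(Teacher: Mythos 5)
Your proof is correct, but there is nothing in the paper to compare it against: the authors explicitly omit the proof of this lemma and defer to Klein's paper, so your write-up is a genuine addition rather than a variant of an in-paper argument. The argument itself is the classical theta-function one and every step checks out. You correctly identify $s$ with the periodized Gaussian $\phi(r)=\sum_{q\in\mathbb{Z}}e^{-c(r-q)^2}$ and verify the reindexing $s(c)=1+2\sum_{i\ge 1}e^{-ci^2}=\phi(0)$, so the lemma is exactly $\phi(r)\le\phi(0)$; the Fourier transform $\widehat{g}(\xi)=\sqrt{\pi/c}\,e^{-\pi^2\xi^2/c}$ is right, Poisson summation is legitimate since $g$ is Schwartz (both series converge absolutely), and the conclusion follows from the nonnegativity of the Fourier coefficients together with $\cos(2\pi kr)\le 1$. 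Your cautionary remark is also substantively correct and worth keeping: a term-by-term comparison of $e^{-c(a+i)^2}+e^{-c(b+i)^2}$ (with $a+b=1$) against $e^{-ci^2}+e^{-c(1+i)^2}$ genuinely fails for small $c$ --- e.g.\ at $r=1/2$ and $c=0.1$ one has $2e^{-c/4}\approx 1.951>1+e^{-c}\approx 1.905$ --- so the inequality really does rely on the global positivity exposed by Poisson summation; the only slip is calling those two terms the ``two smallest'' of $\phi(r)$ when they are in fact its two largest. Beyond making the paper self-contained, your proof yields a bonus the bare citation does not: equality holds precisely when $r\in\mathbb{Z}$ (all cosines equal $1$), i.e.\ the bound $s\le s(c)$ is tight exactly at integer centers, which explains why $s(c)$ is the natural normalizing constant in the subsequent probability bounds.
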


The proof of the lemma was given in \cite{Klein} and is omitted here. The
next lemma provides a lower bound on the probability that Klein's algorithm
or Rand\_SIC returns $\mathbf{z}\in \mathcal{L}$.

\begin{lemma}
(\cite{Klein}) Let $\mathbf{z}$ be a vector in $\mathcal{L}\left( \mathbf{B}%
\right) $ and $\mathbf{y}$ be a vector in $\mathbb{R}^{m}$. The probability
that Klein's algorithm or Rand\_SIC return $\mathbf{z}$ is bounded by%
\begin{equation}
P(\mathbf{z})\geq \frac{1}{\prod_{i=1}^{n}{s(Ar_{i,i}^{2})}}e^{-A\Vert
\mathbf{y}-\mathbf{z}\Vert ^{2}}.  \label{ProbRandDecode}
\end{equation}
\end{lemma}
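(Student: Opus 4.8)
The plan is to track the algorithm as a sequence of $n$ independent random roundings and to show that the event ``$\mathbf{z}$ is returned'' factorizes into a product of per-coordinate discrete-Gaussian probabilities whose exponents add up to $A\|\mathbf{y}-\mathbf{z}\|^2$. Write $\mathbf{z}=\mathbf{Bv}$ with $\mathbf{v}=(v_1,\dots,v_n)\in\mathbb{Z}^n$ the unique integer coordinate vector of $\mathbf{z}$. Since Rand\_SIC produces $\mathbf{z}$ exactly when its rounding decisions satisfy $\hat x_i=v_i$ for every $i$, and since the rounding at step $i$ is drawn independently once the earlier decisions $\hat x_{i+1},\dots,\hat x_n$ are fixed, the first step is to condition on the earlier decisions already matching $\mathbf{v}$ and write $P(\mathbf{z})=\prod_{i=1}^n P(\hat x_i=v_i\mid \hat x_j=v_j,\,j>i)$. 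With the earlier coordinates set to $\mathbf{v}$, the argument fed to Rand\_Round at step $i$ is $\rho_i=(y_i'-\sum_{j>i}r_{i,j}v_j)/r_{i,i}$, so by the definition (\ref{GaussDistr}) each factor equals $e^{-c_i(\rho_i-v_i)^2}/s_i$, where $c_i=Ar_{i,i}^2$ and $s_i=\sum_q e^{-c_i(\rho_i-q)^2}$ is the actual normalizer at that step.

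The second step is the exponent bookkeeping. Using $r_{i,i}=\|\mathbf{\hat{b}}_i\|$ and the upper-triangular structure of $\mathbf{R}$, I would observe that $r_{i,i}(\rho_i-v_i)=y_i'-\sum_{j\ge i}r_{i,j}v_j=(\mathbf{y}'-\mathbf{Rv})_i$, so that $c_i(\rho_i-v_i)^2=A\,((\mathbf{y}'-\mathbf{Rv})_i)^2$ and hence $\sum_i c_i(\rho_i-v_i)^2=A\|\mathbf{y}'-\mathbf{Rv}\|^2$. It then remains to pass from the projected residual to the true one: since $\mathbf{y}'=\mathbf{Q^\dagger y}$ and $\mathbf{z}=\mathbf{Bv}=\mathbf{QRv}$, we get $\mathbf{y}'-\mathbf{Rv}=\mathbf{Q^\dagger}(\mathbf{y}-\mathbf{z})$, and because $\mathbf{Q}$ has orthonormal columns, $\|\mathbf{Q^\dagger}(\mathbf{y}-\mathbf{z})\|\le\|\mathbf{y}-\mathbf{z}\|$. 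Consequently $e^{-A\|\mathbf{y}'-\mathbf{Rv}\|^2}\ge e^{-A\|\mathbf{y}-\mathbf{z}\|^2}$, which is exactly what is needed to turn the exact expression for the numerator into the claimed lower bound.

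The final step is to bound the normalizers. The preceding lemma gives $s_i\le s(c_i)=s(A\|\mathbf{\hat{b}}_i\|^2)$ for every $i$, irrespective of the fractional offset $\rho_i$; since the $s_i$ sit in the denominator, this yields $1/\prod_i s_i\ge 1/\prod_i s(A\|\mathbf{\hat{b}}_i\|^2)$. Chaining the three steps gives $P(\mathbf{z})=e^{-A\|\mathbf{y}'-\mathbf{Rv}\|^2}/\prod_i s_i\ge e^{-A\|\mathbf{y}-\mathbf{z}\|^2}/\prod_i s(A\|\mathbf{\hat{b}}_i\|^2)$, as claimed. The same argument applies to the recursive form of Table~I, the only difference being that the projection coefficient and the residual update are computed directly in $\mathbb{R}^m$ rather than through $\mathbf{R}$.

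I expect the main obstacle to be the factorization in the first step: one must argue cleanly that conditioning on the earlier decisions being correct leaves the current rounding as an \emph{unconditioned} discrete Gaussian centered at a deterministic $\rho_i$, so that the joint probability is a genuine product rather than merely bounded by one. A secondary subtlety worth stating explicitly is the step $\|\mathbf{Q^\dagger}(\mathbf{y}-\mathbf{z})\|\le\|\mathbf{y}-\mathbf{z}\|$, which is a strict inequality precisely when $m>n$, because the component of $\mathbf{y}-\mathbf{z}$ orthogonal to the lattice span is discarded by $\mathbf{Q^\dagger}$; this is what forces the result to be stated as a lower bound rather than an equality.
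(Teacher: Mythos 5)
Your proposal is correct and follows essentially the same route as the paper's proof: factor $P(\mathbf{z})$ into the $n$ per-step conditional rounding probabilities, collect the exponents into $A\Vert\mathbf{y}'-\mathbf{R}\mathbf{v}\Vert^{2}$ using $r_{i,i}=\Vert\mathbf{\hat{b}}_{i}\Vert$, and bound each normalizer by Lemma 1. If anything, you are more careful than the paper at the final step: the paper asserts the equality $\Vert\mathbf{y}'-\mathbf{R}\xi\Vert^{2}=\Vert\mathbf{y}-\mathbf{B}\xi\Vert^{2}$, which holds only when $m=n$ or $\mathbf{y}$ lies in the column span of $\mathbf{B}$, whereas your observation that $\Vert\mathbf{Q}^{\dagger}(\mathbf{y}-\mathbf{z})\Vert\leq\Vert\mathbf{y}-\mathbf{z}\Vert$ (so the exponential only increases) is what actually justifies the stated lower bound in general.
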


\begin{proof}
The proof of the lemma was given in \cite{Klein} for the recursive version of Klein's algorithm.
Here, we give a more straightforward proof for
Rand\_SIC. Let $\mathbf{z=}\xi _{1}\mathbf{b}_{1}+\mathbf{\ldots }+\xi _{n}%
\mathbf{b}_{n}=\mathbf{B\xi }\in \mathcal{L},\xi _{i}\in \mathbb{Z}$ and
consider the invocation of Rand\_SIC$_{A}\left( \mathbf{y^{\prime }}\right) $%
. Using Lemma 1 and (\ref{GaussDistr}), the probability of $x_{i}=\xi _{i}$
is at least%
\begin{equation}
\begin{split}
& \frac{1}{{s(Ar_{i,i}^{2})}}e^{-Ar_{i,i}^{2}\left( ({y}_{i}^{\prime
}-\sum_{j=i+1}^{n}r_{i,j}\xi _{j})/r_{i,i}\right) {{^{2}}}} \\
=& \frac{1}{{s(Ar_{i,i}^{2})}}e^{-A\left( {y}_{i}^{\prime
}-\sum_{j=i+1}^{n}r_{i,j}\xi _{j}\right) {{^{2}}}}.
\end{split}%
\end{equation}%
By multiplying these $n$ probabilities, we obtain a lower bound on the
probability that Rand\_SIC returns $\mathbf{z}$%
\begin{equation}
\begin{split}
P\left( \mathbf{z}\right) & \geq \frac{1}{\prod_{i\leq n}{s(Ar_{i,i}^{2})}}%
e^{-A\sum_{i=1}^{n}\left( {y}_{i}^{\prime }-\sum_{j=i+1}^{n}r_{i,j}\xi
_{j}\right) ^{2}} \\
& =\frac{1}{\prod_{i\leq n}{s(Ar_{i,i}^{2})}}e^{-A\Vert \mathbf{y}^{\prime }-%
\mathbf{R\xi }\Vert ^{2}} \\
& \geq\frac{1}{\prod_{i\leq n}{s(Ar_{i,i}^{2})}}e^{-A\Vert \mathbf{y}-\mathbf{%
B\xi }\Vert ^{2}}.
\end{split}%
\end{equation}%
So the probability is as stated in Lemma 2.
\end{proof}

A salient feature of (\ref{ProbRandDecode}) is that the closest lattice point is the most likely
to be sampled. In particular, the lower bound resembles the Gaussian distribution. The closer
$\mathbf{z}$ is to $\mathbf{y}$, the more likely it will be sampled. Klein showed that when
$A=\log n/\min_{i}r_{i,i}^{2}$, the probability of returning $\mathbf{z}\in \mathcal{L}$ is
\begin{equation}
\Omega(n^{-\Vert \mathbf{y}-\mathbf{z}\Vert ^{2}/\min_{i}r_{i,i}^{2}}). \label{ProbRandDecode2}
\end{equation}%
The significance of lattice reduction can be seen here, as increasing $%
\min_{i}r_{i,i}^{2}$ will increase the probability lower bound (\ref{ProbRandDecode2}).

As lattice reduction-aided decoding normally ignores the boundary of the constellation, the
samples returned by Rand\_SIC$_{A}(\mathbf{y^{\prime }})$ come from an extended version of the
original constellation. We discard those samples that happen to lie outside the boundary of the
original constellation and choose the closest among the rest lattice points. When no lattice
points within the boundary are found, we simply remap the closest one back to the constellation by
``hard-limiting", i.e., remap $\hat{x}_i$ to one of the two boundary integers that is closer to
it.

\textit{Remark:} A natural questions is whether a randomized version of ZF exists. The answer is
yes. This can be done by applying random rounding in ZF. However, since its performance is not as
good as randomized SIC, it will not be considered in this paper.

\section{ANALYSIS AND OPTIMIZATION}

The list size $K$ is often limited in communications. Given $K$, the parameter $A$ has a profound
impact on the decoding performance, and Klein's choice $A=\log n/\min_{i}r_{i,i}^{2}$ is not
necessarily optimum. In this Section, we want to answer the following questions about randomized
lattice decoding:

\begin{itemize}
\item Given $K$, what is the optimum value of $A$?

\item Given $K$ and associated optimum $A$, how much is the gain in decoding
performance?

\item What is the limit of sampling decoding?
\end{itemize}

Indeed, there exists an optimum value of $A$ when $K$ is finite, since $%
A\rightarrow 0$ means uniform sampling of the entire lattice while $%
A\rightarrow \infty$ means Babai's algorithm. We shall present an approximate analysis of optimum
$A$ for a given $K$ in the sense of maximizing the correct decoding radius, and then estimate the
decoding gain over Babai's algorithm. The analysis is not exact since it is based on the correct
decoding radius only; nonetheless, it captures the key aspect of the decoding performance and can
serve as a useful guideline to determine the parameters in practical implementation of Klein's
algorithm.

\subsection{Optimum Parameter $A$}


We investigate the effect of parameter $A$ on the probability Rand\_SIC returns $\mathbf{z}\in
\mathcal{L}$. Let $A=\log \rho /\min_{i}r_{i,i}^{2}$, where ${\rho >1}$ (so that $A>0$). Then
$\rho $ is the parameter to be optimized. Since $c_{i}=Ar_{i,i}^{2}\geq \log
\rho $, we have the following bound for $s(c_{i})$:%
\begin{eqnarray}
s(c_{i}) &=&\sum_{i\geq 0}{e^{-c_{i}i^{2}}+e^{-c_{i}(1+i)^{2}}}  \notag \\
&\leq &\sum_{i\geq 0}\rho {^{-i^{2}}+\rho ^{-(1+i)^{2}}}  \notag \\
&=&1+2\left( {\rho }^{-1}+{\rho }^{-4}+{\rho }^{-9}+\ldots \right)  \notag \\
&<&1+2/{\rho +2\rho }^{-4}/\left( 1-{\rho }^{-5}\right) .
\end{eqnarray}%
Hence%
\begin{eqnarray}
\prod_{i=1}^{n}{s(c_{i})} &{<}&\left( \exp \left( 2/{\rho }+{2\rho }%
^{-4}/\left( 1-{\rho }^{-5}\right) \right) \right) ^{n}  \notag \\
&=&e^{\frac{2n}{\rho }(1+{g(\rho )})}\text{,}  \label{bound0}
\end{eqnarray}%
where ${g(\rho )=\rho }^{-3}/\left( 1-{\rho }^{-5}\right) $. With this
choice of parameter $A$, (\ref{ProbRandDecode}) can be bounded from below
by
\begin{equation}
P(\mathbf{z})>e^{-\frac{2n}{\rho }(1+{g(\rho )})}\cdot \rho ^{-\Vert \mathbf{%
y}-\mathbf{z}\Vert ^{2}/\min_{i}r_{i,i}^{2}}. \label{ProbRandDecode3}
\end{equation}

Now, let~$\mathbf{z}_{K}$ be a point in the lattice, with~$P(%
\mathbf{z}_{K})>1/K$. With~$K$ calls to Klein's algorithm, the
probability of missing~$\mathbf{z}_{K}$ is not larger than~$%
(1-1/K)^{K}<1/e$. By increasing the number of calls to $cK$ ($c\geq 1$ is a constant independent
of $n$), we can make this missing probability smaller than $1/e^c$. The value of $c$ could be
found by simulation, and $c=1$ is often enough. Therefore, any such lattice point~$\mathbf{z}_{K}$
will be found with probability close to one.
We assume that ${\rho }$ is not too small such that
${g(\rho )}$ is negligible. This is a rather weak condition: even $\rho\geq 2$ is sufficient. As
will be seen later, this condition is
indeed satisfied for our purpose. From (\ref{ProbRandDecode3}), we obtain%
\begin{eqnarray}
  e^{-\frac{2n}{\rho }}\cdot \rho ^{-\Vert \mathbf{y}-\mathbf{z}_{K}\Vert
^{2}/\min_{i}r_{i,i}^{2}}\approx \frac{1}{K}   \nonumber \\
  \Vert \mathbf{y}-\mathbf{z}_{K}\Vert \approx \min_{i}r_{i,i} \cdot \sqrt{\log _{\rho }\left(
Ke^{-2n/\rho }\right)}.\label{bound1}
\end{eqnarray}


The sampling decoder will find the closest vector point almost surely if the distance from input
vector $\mathbf{y}$ to the lattice is less than the right hand side of (\ref{bound1}), since the
probability of being sampled can only be higher than $1/K$. In this sense, the right hand side of
(\ref{bound1}) can be thought of as the decoding radius of the randomized BDD. We point out that
the right hand side of (\ref{bound1}) could be larger than $R_{\text{ILD}}$ when $K$ is
excessively large, but we are only interested in the case where it is small than $R_{\text{ILD}}$
for complexity reasons. In such a case,
we define the \emph{%
decoding radius} of sampling decoding as
\begin{equation}
R_{\text{Random}}(\rho)\triangleq \min_{1\leq i\leq n}r_{i,i}\sqrt{\log _{\rho }\left( Ke^{-2n/\rho
}\right)} . \label{bound}
\end{equation}%
This gives a tractable measure to optimize. The meaning of $R_{\text{Random}}(\rho)$ is that as
long as the distance from $\mathbf{y}$ to the lattice is less than $R_{\text{Random}}(\rho)$, the
randomized decoder will find the closest lattice point with high probability. It is natural that
$\rho $ is chosen to maximize the value of $R_{\text{Random}}(\rho)$ for the
best decoding performance. Let the derivative of $R_{\text{Random}}^2(\rho)$ with respect to $\rho $ be zero:%
\begin{equation}
\frac{\partial \left( R_{\text{Random}}^2(\rho) \right) }{\partial \rho }=\min_{1\leq i\leq
n}r_{i,i}^{2}\left( \frac{2n}{\rho ^{2}\log \rho }+\frac{2n}{\rho ^{2}\log ^{2}\rho }-\frac{\log
K}{\rho \log ^{2}\rho }\right) =0.
\end{equation}%
Because $\rho >1$, we have%
\begin{equation}
\log K=\frac{2n}{\rho }\log e\rho .
\end{equation}%
Consequently, the optimum $\rho $ can be determined from the following
equation%
\begin{equation}
K=\left( e\rho _{0}\right) ^{2n/\rho _{0}}.  \label{sizeK}
\end{equation}%
By substituting (\ref{sizeK}) back into (\ref{bound}), we get the optimum decoding radius
\begin{equation}
R_{\text{Random}} \triangleq R_{\text{Random}}(\rho_0) =\sqrt{\frac{2n}{\rho _{0}}}\min_{1\leq
i\leq n}r_{i,i}.  \label{drand}
\end{equation}

To further see the relation between $\rho _{0}$ and $K$, we calculate the
derivative of the function $f\left( \rho \right) \triangleq \left( e\rho
\right) ^{2n/\rho }$, $\rho >1$ with respect to $\rho $. It follows that%
\begin{eqnarray*}
\log f\left( \rho \right) &=&\frac{2n}{\rho }\log e\rho \\
\frac{\partial \left( f\left( \rho \right) \right) }{f\left( \rho \right)
\partial \rho } &=&-\frac{2n}{\rho ^{2}}\log e\rho +\frac{2n}{\rho ^{2}} \\
&=&-\frac{2n}{\rho ^{2}}\log \rho .
\end{eqnarray*}%
Hence%
\begin{eqnarray*}
\frac{\partial \left( f\left( \rho \right) \right) }{\partial \rho }
&=&-f\left( \rho \right) \frac{2n}{\rho ^{2}}\log \rho \\
&=&-\frac{2n}{\rho ^{2}}\left( e\rho \right) ^{2n/\rho }\log \rho ,\quad
\rho >1 \\
&<&0.
\end{eqnarray*}%
Therefore, $f\left( \rho \right) =\left( e\rho \right) ^{2n/\rho }$ is a monotonically decreasing
function when $\rho >1$. Then, we can check that a large value of $A$ is required for a small list
size $K$, while $A$ has to be decreased for a large list size $K$. It is easy to see that Klein's
choice of parameter $A$, i.e., $\rho =n$, is only optimum when $K\approx (en)^{2}$. If we choose
$K<(en)^{2}$ to reduce the implementation complexity, then $\rho _{0}>n$.

Fig. 1 shows the bit error rate against $\log \rho $ for decoding a $%
10\times 10$ (i.e., $n_{T}=n_{R}=10$) uncoded MIMO system with $K=20$, when $%
E_{b}/N_{0}=19$ dB. It can be derived from (\ref{sizeK}) that $\log \rho _{0}=4.27$. Simulation
results confirm the choice of the optimal $\rho $ offered by (\ref{sizeK}) with the aim of
maximizing $R_{\text{Random}}(\rho)$.


\begin{figure}[tbp]
\centering
\par
\includegraphics[scale=0.65]{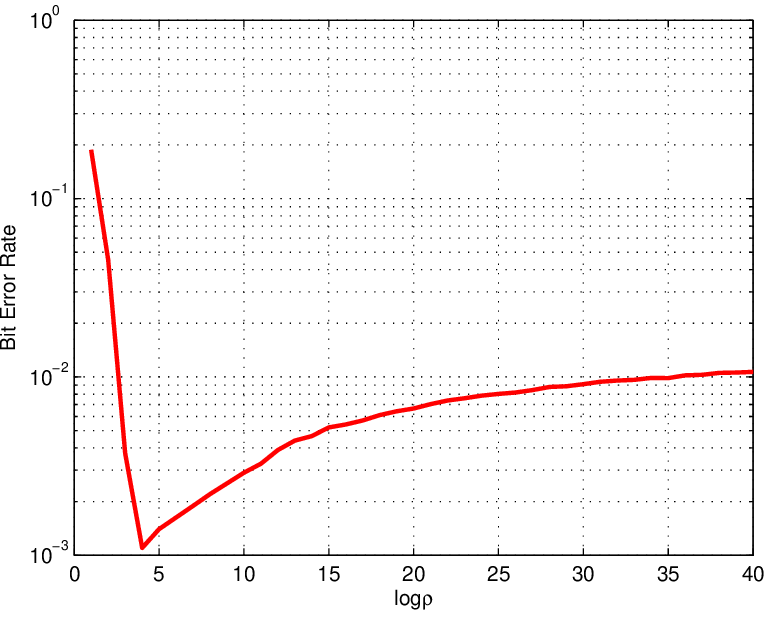}\newline
\caption{BER vs. $\log\protect\rho$ for a $10 \times 10$ uncoded system
using 64-QAM, $K=20$ and SNR per bit = 19 dB.}
\label{fig:1}
\end{figure}

\subsection{Complexity versus Performance Gain }

We shall determine the effect of complexity on the performance gain of sampling decoding over
Babai's decoding. Following \cite{LingIT07}, we define the gain in squared decoding radius as
\begin{equation*}
G\triangleq \frac{R_{\text{Random}}^{2}}{R_{\text{SIC}}^{2}}.
\end{equation*}%
%
%
%
%
%
From (\ref{dbabai}) and (\ref{drand}), we get%
\begin{equation}
G=8n/\rho _{0},\quad \rho _{0}>1.  \label{ao}
\end{equation}%
It is worth pointing out that $G$ is independent of whether or which algorithm of lattice
reduction is applied, because the term $\min_{1\leq i\leq n}r_{i,i}$ has been canceled out.

By substituting (\ref{ao}) in (\ref{sizeK}), we have%
\begin{equation}
K=\left\lceil \left( 8en/G\right) ^{G/4}\right\rceil ,\quad G<8n.
\label{G-K}
\end{equation}%
Equation (\ref{G-K}) reveals the tradeoff between $G$ and $K$. Larger $G$ requires larger $K$. For
fixed performance gain $G$, randomized lattice decoding has polynomial complexity with respect to
$n$. More precisely, each call to Rand\_SIC incurs $O(n^{2})$ complexity; for fixed $G$,
$K=O(n^{G/4})$. Thus the complexity of randomized lattice decoding is $O(n^{2+G/4})$, excluding
pre-processing (lattice reduction and QR decomposition). This is the most interesting case for
decoding applications, where practical algorithms are desired. In this case, $\rho_0$ is linear
with $n$ by (\ref{ao}), thus validating that $g(\rho)$ in (\ref{ProbRandDecode3}) is indeed
negligible.

Table \ref{Table:Gain} shows the computational complexity required to achieve the performance gain
from $3$ dB to $12$ dB. It can be seen that a significant gain over SIC can be achieved at
polynomial complexity. It is particularly easy to recover the first 3 dB loss of Babai's decoding,
which needs $O(\sqrt{n})$ samples only.

We point out that Table \ref{Table:Gain} holds in the asymptotic sense. It should be used with
caution for finite $n$, as the estimate of $G$ could be optimistic. The real gain certainly cannot
be larger than the gap to ML decoding. The closer Klein's algorithm performs to ML decoding, the
more optimistic the estimate will be. This is because the decoding radius alone does not
completely characterize the performance. Nonetheless, the estimate is quite accurate for the first
few dBs, as will be shown in simulation results.

\begin{table}\renewcommand{\arraystretch}{1.5}  \centering%
\caption{Required value of $K$ to achieve gain $G$ in randomized lattice
decoding (the complexity excludes pre-processing)}%
\begin{tabular}{|c||c|c|c|c|}
\hline Gain in dB & $G$ & $\rho _{0}$ & $K$ & Complexity \\ \hline $3$ & $2$ & $4n$ & $\sqrt{4en}$
& $O( n^{5/2}) $ \\ \hline $6$ & $4$ & $2n$ & $2en$ & $O( n^3) $ \\ \hline $9$ & $8$ & $n$ & $(
en) ^{2}$ & $O( n^{4}) $ \\ \hline
$12$ & $16$ & $n/2$ & $( en/2) ^{4}$ & $O( n^{6}) $ \\
\hline
\end{tabular}%
\label{Table:Gain}%
\end{table}%



\subsection{Limits}


Sampling decoding has its limits. Because equation (\ref{ao}) only holds when $\rho _{0}>1$, we
must have $G<8n$. In fact, our analysis requires that $\rho _{0}$ is not close to 1. Therefore, at
best sampling decoding can achieve a linear gain $G=O(n)$.
To achieve near-ML performance asymptotically, $G$ should exceed the proximity factor, i.e.,%
\begin{equation}\label{FG}
F_{\text{SIC}} \leq G=8n/\rho _{0},\quad \rho _{0}>1.
\end{equation}%
However, this cannot be satisfied asymptotically, since $F_{\text{SIC}}$ is exponential in $n$ for
LLL reduction (and is $n^2$ for dual KZ reduction). Of note is the proximity factor of random
lattice decoding $F_{\text{Random}}=F_{\text{SIC}}/G$, which is still exponential for LLL
reduction.



Further, if we do want to achieve $G>8n$, sampling decoding will not be useful. One can still
apply Klein's choice $\rho=n$, but it will be even less efficient than uniform sampling.
Therefore, at very high dimensions, sampling decoding might be worse than sphere decoding if one
sticks to ML decoding.

The $G=O(n)$ gain is asymptotically vanishing compared to the exponential proximity factor of LLL.
Even this $O(n)$ gain is mostly of theoretic interest, since $K$ will be huge. Thus, sampling is
probably best suited as a polynomial-complexity algorithm to recover a fixed amount of the gap to
ML decoding.

Nonetheless, sampling decoding is quite useful for a significant range of $n$ in practice.  On one
hand, it is known that the real gap between SIC and ML decoding is smaller than the worst-case
bounds; we can run simulations to estimate the gap, which is often less than 10 dB for $n\leq 32$.
On the other hand, the estimate of $G$ does not suffer from such worst-case bounds; thus it has
good accuracy. For such a range of $n$, sampling decoding performs favorably, as it can achieve
near-ML performance at polynomial complexity.

%

\section{IMPLEMENTATION}

In this Section, we address several issues of implementation. In particular, we propose an
efficient implementation of the sampler, extend it to complex-valued lattices, to soft output, and
to MMSE.

\subsection{Efficient Randomized Rounding}

The core of Klein's decoder is the randomized rounding with respect to discrete Gaussian
distribution (\ref{GaussDistr}). Unfortunately, it can not be generated by simply quantizing the
continuous Gaussian distribution. A rejection algorithm is given in \cite{Devroye} to generate a
random variable with the discrete Gaussian distribution from the continuous Gaussian distribution;
however, it is efficient only when the variance is large. From (\ref{GaussDistr}), the variance in
our problem is less than $1/\log \rho_0$. From the analysis in Section IV, we recognize that $\rho
_{0}$ can be large, especially for small $K$. Therefore, the implementation complexity can be
high.


Here, we propose an efficient implementation of random rounding by
truncating the discrete Gaussian distribution and prove the accuracy of this
truncation. Efficient generation of $Q$ results in high decoding speed.

In order to generate the random integer $Q$ with distribution (\ref%
{GaussDistr}), a naive way is to calculate the cumulative distribution
function
\begin{equation}
F_{c,r}(q)\triangleq P\left( Q\leq q\right) =\sum_{i\leq q}P\left(
Q=i\right) .
\end{equation}%
Obviously, $P(Q=q)=F_{c,r}(q)-F_{c,r}(q-1)$. Therefore, we generate a
real-valued random number $z$ that is uniformly distributed on $[0,1]$; then
we let $Q=q$ if $F_{c,r}(q-1)\leq z<F_{c,r}(q)$. A problem is that this has
to be done online, since $F_{c,r}(q)$ depends on $c$ and $r$. The
implementation complexity can be high, which will slow down decoding.


We now try to find a good approximation to distribution (\ref{GaussDistr}).
Write $r=\lfloor r\rfloor +a$, where $0\leq a<1$. Let $b=1-a$. Distribution (%
\ref{GaussDistr}) can be rewritten as follows%
\begin{equation}
P(Q=q)=\left\{
\begin{array}{l}
e^{-c(a+i)^{2}}/s,\quad q=\lfloor r\rfloor -i \\
e^{-c(b+i)^{2}}/s,\quad q=\lfloor r\rfloor +1+i%
\end{array}%
\right.
\end{equation}%
where $i\geq 0$ is an integer and
\begin{equation*}
s=\sum_{i\geq 0}(e^{-c\left( a+i\right) ^{2}}+e^{-c\left( b+i\right) ^{2}}).
\end{equation*}%
Because $A=\log \rho /\min_{i}\Vert \mathbf{\hat{b}}_{i}\Vert ^{2}$, for every invocation of
Rand\_Round$_{c}\left( r\right) $, we have $c\geq \log \rho $. We use this bound to estimate the
probability $%
P_{2N}$ that $r$ is rounded to the $2N$-integer set $\left\{ \lfloor r\rfloor
-N+1\text{,...,}\lfloor r\rfloor \text{,...,}\lfloor r\rfloor +N\right\} $. Now the probability
that $q$ is not one of these $2N$ points can be bounded as
\begin{eqnarray}
1-P_{2N} &=&\sum_{i\geq N}\left( e^{-c(a+i)^{2}}+e^{-c(b+i)^{2}}/s\right)
\notag \\
&\leq &\left( 1+\rho ^{-(2N+1)}+\rho ^{-(4N+4)}\cdots \right) \cdot  \notag
\\
&&\left( e^{-c(a+N)^{2}}+e^{-c(b+N)^{2}}\right) /s  \notag \\
&<&\left( 1+O(\rho ^{-(2N+1)})\right) \cdot
\notag \\
&&\left( e^{-c(a+N)^{2}}+e^{-c(b+N)^{2}}\right) /s.
\end{eqnarray}%
Here, and throughout this subsection, $O(\cdot)$ is with respect to $N$. Since $s\geq e^{-ca^{2}}$
and $s\geq e^{-cb^{2}}$, we have
\begin{eqnarray}
1-P_{2N} &<&\left( 1+O(\rho ^{-(2N+1)}) \right)
\cdot  \notag  \label{tailbound} \\
&&\left( e^{-c(a+N)^{2}}/e^{-ca^{2}}+e^{-c(b+N)^{2}}/e^{-cb^{2}}\right)
\notag \\
&\leq &2\left( 1+O(\rho ^{-(2N+1)}) \right)
e^{-N^{2}c}  \notag \\
&= &O\left( \rho^{-N^2} \right) .
\end{eqnarray}%
Hence%
\begin{equation}
P_{2N}>1-O(\rho ^{-N^2}) .
\end{equation}%
Since $\rho >1$, the tail bound (\ref{tailbound}) decays very fast. Consequently, it is almost
sure that a call to Rand\_Round$_{c}\left( r\right) $ returns an
integer in $\left\{ \lfloor r\rfloor -N+1\text{,...,}\lfloor r\rfloor \text{%
,...,}\lfloor r\rfloor +N\right\} $ as long as $N$ is not too small.
Therefore, we can approximate distribution (\ref{GaussDistr}) by $2N$-point
discrete distribution as follows.%
\begin{equation}
P(Q=q)=\left\{
\begin{array}{l}
e^{-c(a+N-1)^{2}}/s^{\prime } \\
\quad \vdots \\
e^{-ca^{2}}/s^{\prime } \\
e^{-cb^{2}}/s^{\prime } \\
\quad \vdots \\
e^{-c(b+N-1)^{2}}/s^{\prime }%
\end{array}%
\right.
\begin{array}{l}
q=\lfloor r\rfloor -N+1 \\
\quad \vdots \\
q=\lfloor r\rfloor \\
q=\lfloor r\rfloor +1 \\
\quad \vdots \\
q=\lfloor r\rfloor +N%
\end{array}%
\end{equation}%
where%
\begin{equation*}
s^{\prime }=\sum_{i=0}^{N-1}(e^{-c(a+i)^{2}}+e^{-c(b+i)^{2}}).
\end{equation*}%
%
%
%
%
%
%
%
%
%
%
%
%
%
%
%
%
%
%
%
%
%
%
Fig. 2 shows the distribution (\ref{GaussDistr}), when $r$ $=-5.87$ and $%
c=3.16$. The values of $r$ and $c$ are the interim results obtained by decoding an uncoded
$10\times 10$ system. The distribution of $Q$ concentrates at $%
\lfloor r\rfloor =-6$ and $\lfloor r\rfloor +1=-5$ with probability 0.9 and
0.08 respectively. Fig. 3 compare the bit error rates associated with
different $N$ for an uncoded $10\times 10$ ($n_{T}=n_{R}=10$) system with $%
K=20$. It is seen that the choice of $N=2$ is indistinguishable from larger $%
N$. In fact, it is often adequate to choose a 3-point approximation as the probability in the
central 3 points is almost one.

\begin{figure}[tbp]
\centering
\par
\includegraphics[scale=0.65]{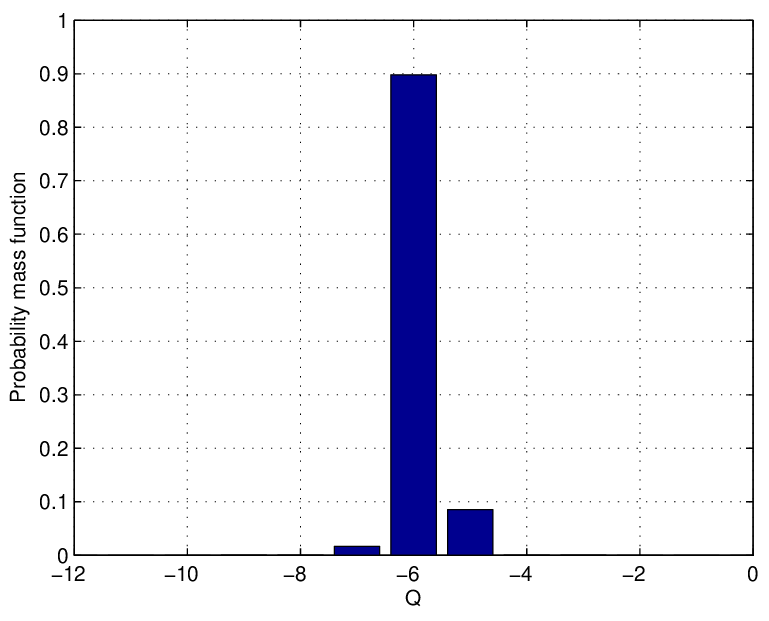}\newline
\caption{Distribution of $Q$ for $r=-5.87$ and $c=3.16$. $P(Q=-7)=0.02$, $%
P(Q=-6)=0.9$ and $P(Q=-5)=0.08$.}
\label{fig:2}
\end{figure}

\begin{figure}[tbp]
\centering
\par
\includegraphics[scale=0.65]{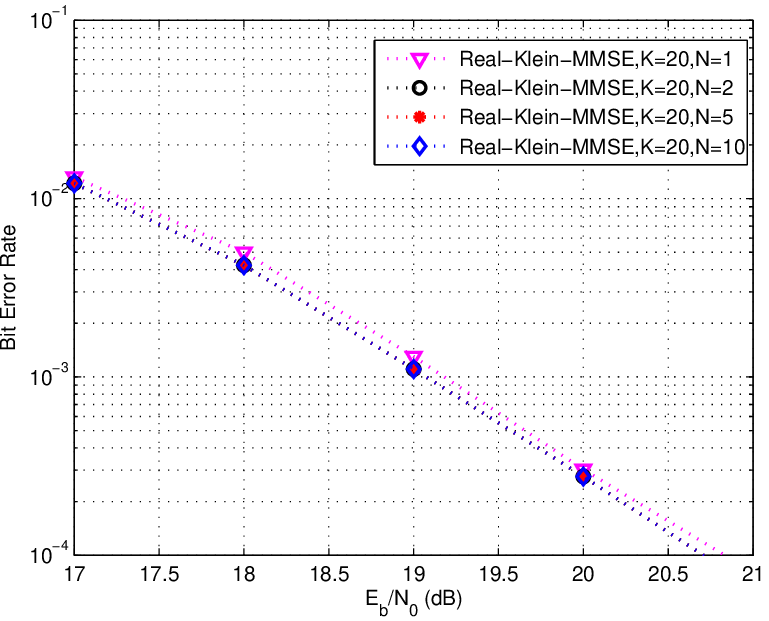}\newline
\caption{Bit error rate vs. average SNR per bit for a $10\times 10$ uncoded
system using 64-QAM.}
\label{fig:3}
\end{figure}

The following lemma provides a theoretical explanation to the above argument from the viewpoint of
\emph{statistical distance} ~\cite[Chap.\ 8]{BK:Micciancio02}. The statistical distance measures
how two probability distributions differ from each other, and is a convenient tool to analyze
randomized algorithms. An important property is that applying a deterministic or random function
to two distributions does not increase the statistical distance. This implies an algorithm behaves
similarly if fed two nearby distributions. More precisely, if the output satisfies a property with
probability~$p$ when the algorithm uses a
distribution~$D_1$, then the property is still satisfied with probability~$%
\geq p - \Delta(D_1,D_2)$ if fed~$D_2$ instead of~$D_1$ (see~\cite[Chap.\ 8]%
{BK:Micciancio02}).

\begin{lemma}
Let~$D$ ($D(i)=P(Q=i)$) be the non-truncated discrete Gaussian distribution,
and~$D^{\prime }$ be the truncated $2N$-point distribution. Then the \emph{%
statistical distance} between~$D$ and~$D^{\prime }$ satisfies:
\begin{equation*}
\Delta(D, D^{\prime }) \triangleq \frac{1}{2} \sum_{i \in \mathbb{Z}} |D(i)-D'(i)| = O(\rho^{
-N^2}).
\end{equation*}
\end{lemma}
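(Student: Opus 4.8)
The plan is to split the sum defining $\Delta(D,D')$ according to whether or not the index $i$ lies in the $2N$-point support $S=\{\lfloor r\rfloor-N+1,\ldots,\lfloor r\rfloor+N\}$ of the truncated distribution $D'$, and to exploit the fact that on $S$ the two distributions share the same unnormalized numerators $e^{-c(\cdot)^2}$ and differ only through their normalizing constants $s$ and $s'$.

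First I would record the elementary relations among $s$, $s'$ and $P_{2N}$. Since $s'=\sum_{i=0}^{N-1}(e^{-c(a+i)^2}+e^{-c(b+i)^2})$ is exactly the partial sum obtained by restricting the positive terms of $s$ to $S$, we have $s'\le s$. Writing each $D(i)$ on $S$ as $e^{-c(\cdot)^2}/s$, the mass that $D$ places on $S$ equals $\sum_{i\in S}D(i)=s'/s$, which by the definition of $P_{2N}$ is precisely $P_{2N}=s'/s$. Hence the complementary tail mass is $\sum_{i\notin S}D(i)=1-P_{2N}$, and the earlier tail estimate~(\ref{tailbound}) already bounds this by $O(\rho^{-N^2})$.

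Next I would treat the two pieces separately. For $i\notin S$ we have $D'(i)=0$, so $\sum_{i\notin S}|D(i)-D'(i)|=1-P_{2N}$. For $i\in S$ the numerators coincide, giving $D(i)-D'(i)=e^{-c(\cdot)^2}(1/s-1/s')\le0$; thus $|D(i)-D'(i)|=e^{-c(\cdot)^2}(1/s'-1/s)$, and summing while using $\sum_{i\in S}e^{-c(\cdot)^2}=s'$ yields
\[
\sum_{i\in S}|D(i)-D'(i)|=s'\left(\frac{1}{s'}-\frac{1}{s}\right)=1-\frac{s'}{s}=1-P_{2N}.
\]
Adding the two contributions and multiplying by $\tfrac{1}{2}$ gives $\Delta(D,D')=1-P_{2N}=O(\rho^{-N^2})$, which is the claim.

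Because every step collapses to the single identity $P_{2N}=s'/s$ together with the previously established tail bound, there is no genuine analytic obstacle here; the argument is essentially careful bookkeeping. The one point worth flagging is the clean structural observation that the total truncation error splits into two equal halves — the discarded tail mass and the renormalization discrepancy on $S$ — each of which equals $1-P_{2N}$, so that the factor $\tfrac{1}{2}$ in the definition of statistical distance exactly cancels the factor $2$ and no spurious constant is introduced.
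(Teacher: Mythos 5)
Your proof is correct and follows essentially the same route as the paper's: both split the statistical distance into the tail mass outside the $2N$-point support and the renormalization discrepancy on the support (where $D$ and $D'$ share numerators), and both then invoke the tail bound~(\ref{tailbound}). The only difference is cosmetic — you evaluate both contributions exactly via the identity $P_{2N}=s'/s$ to get $\Delta = 1-P_{2N}$, whereas the paper settles for the upper bound $\Delta \leq \sum_{i \notin S} D(i)$, which amounts to the same thing.
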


\begin{proof}
By definition of~$D^{\prime }$, we have:
\begin{eqnarray*}
\Delta & = & \frac{1}{2} \sum_{i <\lfloor r \rfloor -N + 1} D(i) + \frac{1}{2%
} \sum_{i >\lfloor r \rfloor +N} D(i) \\
&& + \frac{1}{2} \left|1 - \frac{s}{s^{\prime }} \right| \sum_{i = \lfloor r
\rfloor -N + 1}^{\lfloor r \rfloor +N} D(i) \\
& = & \frac{1}{2} \sum_{i <\lfloor r \rfloor -N + 1} D(i) + \frac{1}{2%
} \sum_{i >\lfloor r \rfloor +N} D(i) + \frac{|s^{\prime}-s|}{2s} \\
& \leq & \sum_{i <\lfloor r \rfloor -N + 1} D(i) + \sum_{i >\lfloor r
\rfloor +N} D(i),
\end{eqnarray*}
where~$s = \sum_{i \geq 0}(e^{-c(a+i)^{2}}+e^{-c(b+i)^{2}})$ and~$s^{\prime
}= \sum_{i=0}^{N-1}(e^{-c(a+i)^{2}}+e^{-c(b+i)^{2}})$. The result then
derives from~\eqref{tailbound}.
\end{proof}

As a consequence, the statistical distance between the distributions used by Klein's algorithm
corresponding to the non-truncated and truncated Gaussians is~$nKO(\rho^{-N^2})$. Hence, the
behavior of the algorithm with truncated Gaussian is almost the same.

\subsection{Complex Randomized Lattice Decoding}

Since the traditional lattice formulation is only directly applicable to a real-valued channel
matrix, sampling decoding was given for the real-valued equivalent of the complex-valued channel
matrix. This approach doubles the channel matrix dimension and may lead to higher complexity. From
the complex lattice viewpoint \cite{ComplexLLL}, we study the complex sampling decoding. The
advantage of this algorithm is that it reduces the computational complexity by incorporating
complex LLL reduction \cite{ComplexLLL}.

Due to the orthogonality of real and imaginary part of the complex subchannel, real and imaginary
part of the transmit symbols are decoded in the same step. This allows us to derive complex
sampling decoding by performing randomized rounding for the real and imaginary parts of the
received vector separately.

In this sense, given the real part of input $\mathbf{y}$, sampling
decoding returns real part of $\mathbf{z}$ with probability%
\begin{equation}
P\left( \mathfrak{\Re }\left( \mathbf{z}\right) \right) \geq \frac{1}{%
\prod_{i\leq n}{s(Ar_{i,i}^{2})}}e^{-A\Vert \mathfrak{\Re }\left( \mathbf{y}%
\right) -\mathfrak{\Re }\left( \mathbf{z}\right) \Vert ^{2}}.
\end{equation}%
Similarly, given the imaginary part of input $\mathbf{y}$, sampling
lattice decoding returns imaginary part of $\mathbf{z}$ with probability%
\begin{equation}
P\left( \mathfrak{\Im }\left( \mathbf{z}\right) \right) \geq \frac{1}{%
\prod_{i\leq n}{s(Ar_{i,i}^{2})}}e^{-A\Vert \mathfrak{\Im }\left( \mathbf{y}%
\right) -\mathfrak{\Im }\left( \mathbf{z}\right) \Vert ^{2}}.
\end{equation}%
By multiplying these two probabilities, we get a lower bound on the
probability that the complex sampling decoding returns $\mathbf{z}$%
\begin{eqnarray}
P(\mathbf{z}) &=&P\left( \mathfrak{\Re }\left( \mathbf{z}\right) \right)
\cdot P\left( \mathfrak{\Im }\left( \mathbf{z}\right) \right)  \notag \\
&\geq &\frac{1}{\prod_{i\leq n}{s}^{2}{(Ar_{i,i}^{2})}}e^{-A\left( \Vert
\mathfrak{\Re }\left( \mathbf{y}\right) -\mathfrak{\Re }\left( \mathbf{z}%
\right) \Vert ^{2}+\Vert \mathfrak{\Im }\left( \mathbf{y}\right) -\mathfrak{%
\Im }\left( \mathbf{z}\right) \Vert ^{2}\right) }  \notag \\
&=&\frac{1}{\prod_{i\leq n}{s}^{2}{(Ar_{i,i}^{2})}}e^{-A\Vert \mathbf{y}-%
\mathbf{z}\Vert ^{2}}.
\end{eqnarray}%
Let $A=\log \rho /\min_{i}r_{i,i}^{2}$, where $\rho >1$. Along the same line
of the analysis in the preceding Section, we can easily obtain%
\begin{equation}
P(\mathbf{z})>e^{-\frac{4n}{\rho }(1+{g(\rho )})}\cdot \rho ^{-\Vert \mathbf{%
y}-\mathbf{z}\Vert ^{2}/\min_{1\leq i\leq n}r_{i,i}^{2}}.  \label{Cprob}
\end{equation}%
Given $K$ calls, inequality (\ref{Cprob}) implies the choice of the optimum
value of $\rho $:
\begin{equation}
K=\left( e\rho _{0}\right) ^{4n/\rho _{0}},
\end{equation}%
and the {decoding radius} of complex sampling decoding%
\begin{equation}
R_{\text{Random}}^{\text{C}}=\sqrt{\frac{4n}{\rho _{0}}}%
\min_{1\leq i\leq n}r_{i,i}.
\end{equation}%
Let us compare with the $2n$-dimensional real sampling decoding%
\begin{equation}
R_{\text{Random}}^{\text{R }}=\sqrt{\frac{4n}{\rho _{0}}}%
\min_{1\leq i\leq n}r_{i,i}.
\end{equation}%
Obviously,%
\begin{equation}
R_{\text{Random}}^{\text{C}}=R_{\text{Random}}^{\text{%
R }}
\end{equation}%
Real and complex versions of sampling decoding also have the same parameter $A$ for the same $K$.

\subsection{MMSE-Based Sampling Decoding}

The MMSE detector takes the SNR term into account and thereby leading to an
improved performance. As shown in \cite{wubbenMMSE}, MMSE detector is equal
to ZF with respect to an extended system model. To this end, we define the $%
(m+n)\times n$ extended channel matrix $\underline{\mathbf{B}}$ and the $%
(m+n)\times 1$ extended receive vector $\underline{\mathbf{y}}$ by%
\begin{equation*}
\underline{\mathbf{B}}=\left[
\begin{array}{c}
\mathbf{B} \\
\sigma \mathbf{I}_{n}%
\end{array}%
\right] \text{ \ \ and \ \ }\underline{\mathbf{y}}=\left[
\begin{array}{c}
\mathbf{y} \\
0_{n,1}%
\end{array}%
\right] \text{.}
\end{equation*}%
This viewpoint allows us to incorporate the MMSE criterion in the real and
complex randomized lattice decoding schemes.

\subsection{Soft-Output Decoding}

Soft output is also available from the samples generated in Rand\_SIC.
The $K$ candidate vectors $\mathcal{Z=}\left\{ \mathbf{z}_{1},\cdots,%
\mathbf{z}_{K}\right\} $ can be used to approximate the log-likelihood ratio (LLR), as in
\cite{Hochwald}. For bit $b_i \in \{0,1\}$, the approximated LLR is computed as
\begin{equation}
LLR\left( b_{i}\mid \mathbf{y}\right) =\log \frac{\sum_{\mathbf{z}\mathbf{\in }%
\mathcal{Z}:b_{i}\left( \mathbf{z}\right) =1}\exp \left( -\frac{1}{\sigma
^{2}}\Vert \mathbf{y}-\mathbf{z}\Vert ^{2}\right) }{\sum_{\mathbf{z}\mathbf{\in }%
\mathcal{Z}:b_{i}\left( \mathbf{z}\right) =0}\exp \left( -\frac{1}{\sigma
^{2}}\Vert \mathbf{y}-\mathbf{z}\Vert ^{2}\right) }
\end{equation}%
where $b_{i}\left( \mathbf{z}\right) $ is the $%
i $-th information bit associated with the sample $\mathbf{z}$. The notation $\mathbf{z}%
:b_{i}\left( \mathbf{z}\right) =\mu $ means the set of all vectors $\mathbf{z} $ for which
$b_{i}\left( \mathbf{z}\right) =\mu $.

\subsection{Other issues}

Sampling decoding allows for fully parallel implementation, since the samples can be taken
independently from each other. Thus the decoding speed could be as high as that of a standard
lattice-reduction-aided decoder if it is implemented in parallel.

For SIC, the effective LLL reduction suffices, which has average complexity $O(n^{3}\log n)$ \cite%
{LingISIT07}, and the LLL algorithm can output the matrices $\mathbf{Q}$ and
$\mathbf{R}$ of the QR decomposition.

Since Klein's decoding is random, there is a small chance that all the $K$ samples are further
than the Babai point. Therefore, it is worthwhile always running Babai's algorithm in the very
beginning.
The call can be stopped if the nearest sample point found has distance $\leq
\frac{1}{2}\min_{1\leq i\leq n}r_{i,i}$.

\section{SIMULATION RESULTS}

This section examines the performance of sampling decoding. We assume perfect channel state
information at the receiver. For comparison purposes, the performances of Babai's decoding,
lattice reduction aided
MMSE-SIC decoding, iterative lattice reduction aided MMSE list decoding \cite%
{Shimokawa}, and ML decoding are also shown. Monte Carlo simulation was used to estimate the bit
error rate with Gray mapping.

For LLL reduction, small values of $\delta $ lead to fast convergence, while large values of
$\delta $ lead to a better basis. In our application, increasing $\delta $ will increase the
decoding radius $R_{\text{Random}}$. Since lattice reduction is only performed once at the
beginning of each block, the complexity of LLL reduction is shared by the block. Thus, we set $\delta $%
=0.99 for the best performance. The reduction can be speeded up by applying $\delta=0.75$ to
obtain a reduced basis, then applying $\delta=0.99$ to further reduce it.

Fig. 4 shows the bit error rate for an uncoded system with $n_{T}=n_{R}=10$, $64$-QAM and LLL
reduction ($\delta = 0.99$). Observe that even with $15$ samples (corresponding to a theoretic
gain $G=3$ dB), the performance of the real Klein's decoding enhanced by LLL reduction is
considerably better (by $2.4$ dB) than that of Babai's decoding. Compared to iterative lattice
reduction aided MMSE list decoding with $25$ samples in \cite%
{Shimokawa}, the real Klein's decoding offers not only the improved BER performance (by $1.5$ dB)
but also the promise of smaller list size. MMSE-based real Klein's decoding can achieve further
improvement of $1$ dB. We found that $K=25$ (theoretic gain $G=4$ dB) is sufficient for Real
MMSE-based Klein's
decoding to obtain near-optimum performance for uncoded systems with $%
n_{T}=n_{R}\leq 10$; the SNR loss is less than $0.5$ dB. The complex version of MMSE Klein's
decoding exhibits about $0.2$ dB loss at a BER of $10^{-4}$ when compared to the real version.
Note that the complex LLL algorithm has half of the complexity of real LLL algorithm. At high
dimensions, the real LLL algorithm seems to be slightly better than complex LLL, although their
performances are indistinguishable at low dimensions \cite{ComplexLLL}.

Fig. 5 shows the frame error rate for a $4\times 4$ MIMO system with $4$%
-QAM, using a rate-$1/2$, irregular $\left( 256,128,3\right) $ low-density parity-check (LDPC)
code of codeword length 256 (i.e., 128 information bits). Each codeword spans one channel
realization. The parity check matrix is randomly constructed, but cycles of length $4$ are
eliminated. The maximum number of decoding iterations is set at 50. It is seen that the
soft-output version of sampling decoding is also nearly optimal when $K=24$, with a performance
very close to maximum a posterior probability (MAP) decoding.

Fig. 6 and Fig. 7 show the achieved performance of sampling decoding for the $2\times 2$ Golden
code \cite{GoldenCode} using $16$-QAM and $4\times 4$ Perfect code using $64$-QAM \cite{Oggier}.
The decoding lattices are of dimension 8 and 32 in the real space, respectively. In Fig. 5, the
real MMSE-based Klein decoder with $K=10$ ($G=3$ dB) enjoys 2-dB gain. In Fig. 6, the complex
MMSE-based Klein decoder with $K=20$ ($G=3$ dB), $K=71$ ($G=5$ dB) and $K=174$ ($G=6$ dB) enjoys
3-dB, 4-dB and 5-dB gain respectively. It again confirms that the proposed sampling decoding
considerably narrows the gap to ML performance. Reference \cite{Othman} proposed a decoding scheme
for the Golden code that suffers a loss of $3$ dB with respect to ML decoding, i.e., the
performance is about the same as that of LR-MMSE-SIC. These experimental results are expected, as
LLL reduction has been shown to increase the probability of finding the closest lattice point.
Also, increasing the list size $K$ available to the decoder improves its performance gain. Varying
the number of samples $K$ allows us to negotiate a trade-off between performance and computational
complexity.

Fig. 8 compares the average complexity of 
Babai's decoding, Klein's decoding and sphere decoding for uncoded MIMO systems using $64$-QAM.
The channel matrix remains constant throughout a block of length $100$ and the pre-processing is
only performed once at the beginning of each block. It can be seen that the average flops of
Klein's decoding increases slowly with the dimension, while the average flops of sphere decoding
are exponential in dimension. The computational complexity gap
between Klein's decoding and Babai's decoding is nearly constant for $%
G = 3$ dB or 6 dB. This is because the complexity of Klein's decoding (excluding pre-processing)
is no more than $O(n^3)$ for $G\leq 6$ dB (cf. Table II), meaning the overall complexity is still
$O(n^3\log n)$ (including pre-processing), the same order as that of Babai's decoding.

\begin{figure}[tbp]
\centering
\par
\includegraphics[scale=0.65]{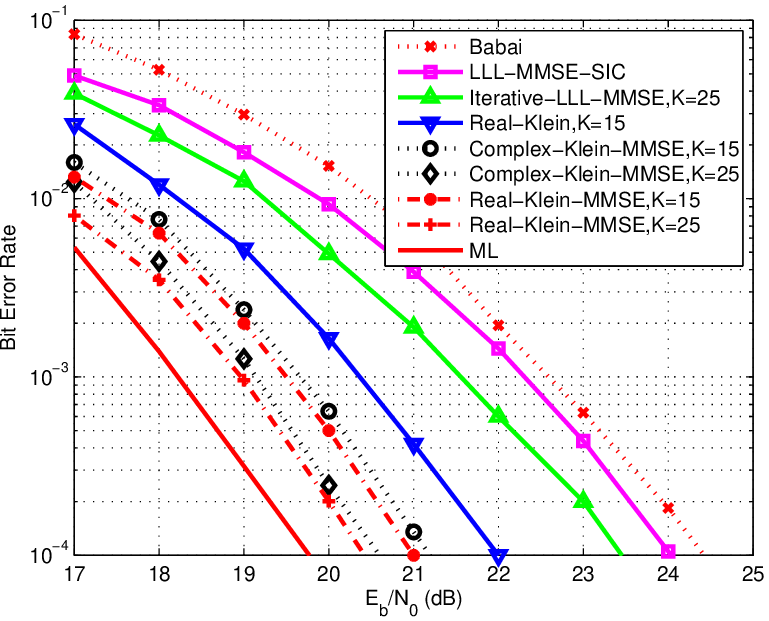}\newline
\caption{Bit error rate vs. average SNR per bit for the uncoded $10 \times
10 $ system using 64-QAM.}
\label{fig:4}
\end{figure}

\begin{figure}[tbp]
\centering
\par
\includegraphics[scale=0.65]{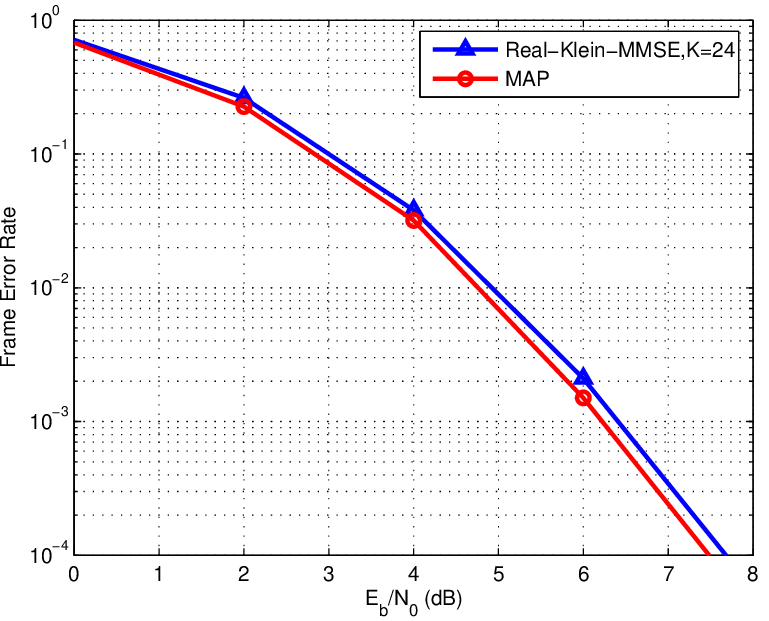}\newline
\caption{Frame error rate vs. average SNR per bit for the $4 \times 4 $ rate-1/2 LDPC code of
codeword length 256 using 4-QAM.} \label{fig:5}
\end{figure}

\begin{figure}[tbp]
\centering
\par
\includegraphics[scale=0.65]{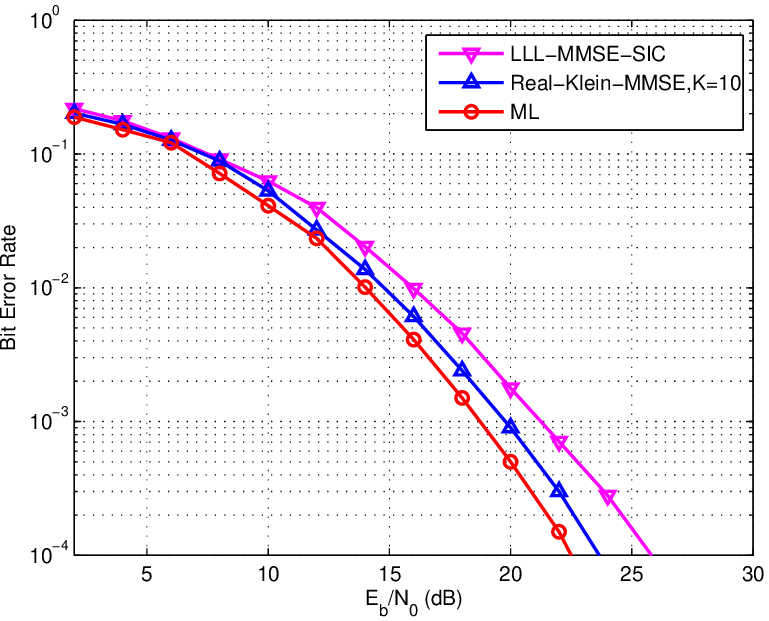}\newline
\caption{Bit error rate vs. average SNR per bit for the $2\times 2$ Golden
code using 16-QAM.}
\label{fig:6}
\end{figure}

\begin{figure}[tbp]
\centering
\par
\includegraphics[scale=0.65]{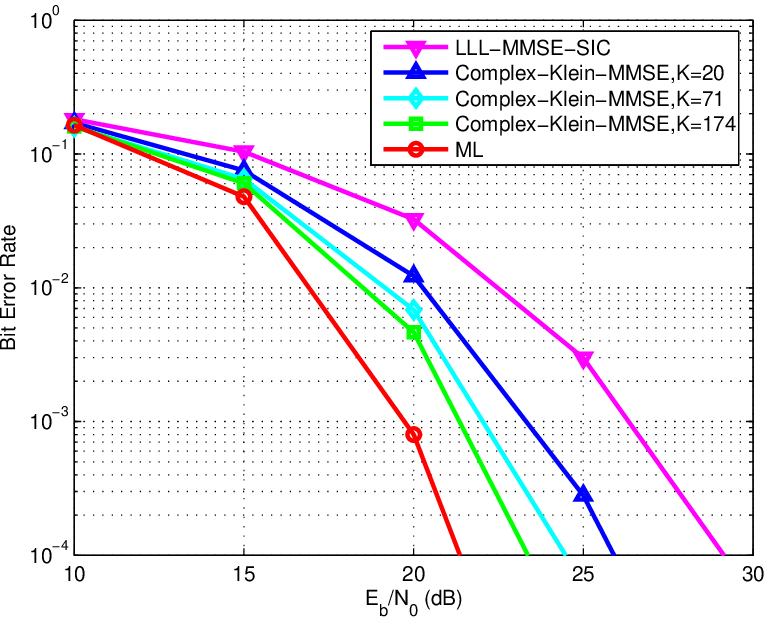}\newline
\caption{Bit error rate vs. average SNR per bit for the $4\times 4$ perfect
code using 64-QAM.}
\label{fig:7}
\end{figure}
\begin{figure}[tbp]
\centering
\par
\includegraphics[scale=0.65]{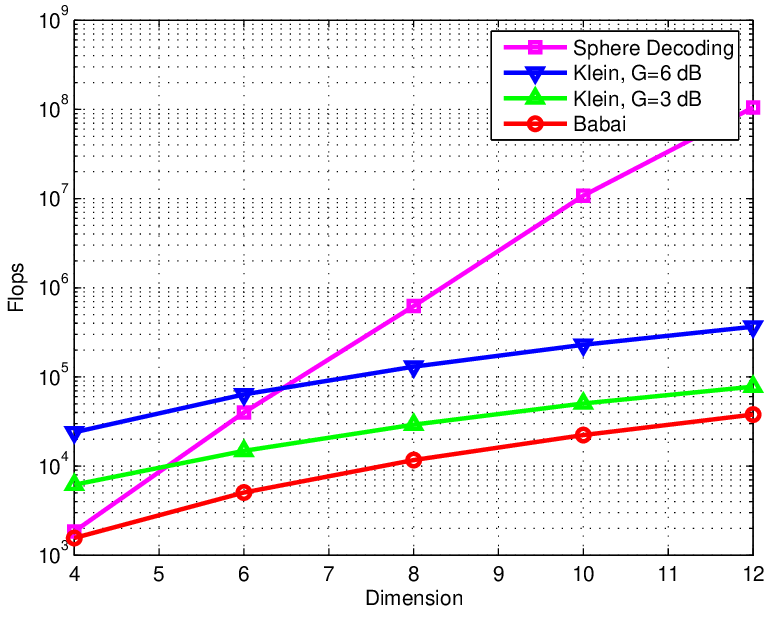}\newline
\caption{Average number of floating-point operations for uncoded MIMO at average SNR per bit = 17
dB. Dimension $n=2n_{T}=2n_{R}$.} \label{fig:8}
\end{figure}

\section{CONCLUSIONS}

In this paper, we studied sampling-based randomized lattice decoding where the standard rounding
in SIC is replaced by random rounding. We refined the analysis of Klein's algorithm and applied it
to uncoded and coded MIMO systems. In essence, Klein's algorithm is a randomized bounded-distance
decoder. Given the number of samples $K$, we derived the
optimum parameter $A$ to maximize the decoding radius $R_{\text{Random}%
}$. Compared to SIC, the best possible gain (measured in squared decoding radius) of our improved
decoder is $G=O(n)$. Although it is asymptotically vanishing compared to the exponential factor of
LLL reduction, the proposed decoder can well be useful in practice. Of particular interest is that
for fixed gain $G$, the value of $K=O(n^{G/4})$ retains the polynomial complexity in $n$. We also
proposed an efficient implementation of random rounding which exhibits indistinguishable
performance, supported by the statistical distance argument for the truncated discrete Gaussian
distribution. The simulations verified that the SNR gain agrees well with $G$ predicted by theory.
With the new approach, it is quite practical to recover $6$ dB of the gap to ML decoding, at
essentially cubic complexity $O(n^3)$. The computational structure of the proposed decoding scheme
is straightforward and allows for an efficient parallel implementation.

\section*{Acknowledgment}
The authors would like to thank the anonymous reviewers for their constructive comments. The third
author gratefully acknowledges the Department of Computing of Macquarie University and the
Department of Mathematics and Statistics of the University of Sydney, where part of this work was
undergone.

\bibliographystyle{IEEEtran}
\bibliography{IEEEabrv,LINGBIB}

\end{document}